\begin{document}

\newtheorem{theo}{Theorem}[section]
\newtheorem{definition}[theo]{Definition}
\newtheorem{lem}[theo]{Lemma}
\newtheorem{prop}[theo]{Proposition}
\newtheorem{coro}[theo]{Corollary}
\newtheorem{exam}[theo]{Example}
\newtheorem{rema}[theo]{Remark}
\newtheorem{example}[theo]{Example}
\newtheorem{principle}[theo]{Principle}
\newcommand{\ninv}{\mathord{\sim}} 
\newtheorem{axiom}[theo]{Axiom}
\numberwithin{equation}{subsection}
\title{Convex Quantum Logic}

\author{{\sc Federico Holik}$^{1}$ \ {\sc ,} \ {\sc Cesar Massri}$^{2}$ \ {\sc
and} \ {\sc Nicol\'{a}s Ciancaglini}$^{1}$}

\maketitle

\begin{center}

\begin{small}
1- Instituto de Astronom\'{\i}a y F\'{\i}sica del Espacio (IAFE)\\
Casilla de Correo 67, Sucursal 28, 1428 Buenos Aires, Argentina\\
2- Departamento de Matem\'{a}tica - Facultad de Ciencias Exactas y
Naturales\\ Universidad de Buenos Aires - Pabell\'{o}n I, Ciudad
Universitaria \\ Buenos Aires, Argentina
\end{small}
\end{center}

\vspace{1cm}

\begin{abstract}
\noindent In this work we study the convex set of quantum states
from a quantum logical point of view. We consider an algebraic
structure based on the convex subsets of this set. The relationship
of this algebraic structure with the lattice of propositions of
quantum logic is shown. This new structure is suitable for the study
of compound systems and shows new differences between quantum and
classical mechanics. This differences are linked to the nontrivial
correlations which appear when quantum systems interact. They are
reflected in the new propositional structure, and do not have a
classical analogue. This approach is also suitable for an algebraic
characterization of entanglement.
\end{abstract}
\bigskip
\noindent

\begin{small}
\centerline{\em Key words: entanglement-quantum logic-convex sets}
\end{small}

\bibliography{pom}

\section{Introduction}

There exists a widely discussed distinction between proper and
improper mixtures \cite{d'esp, Mittelstaedt}. In this work we adopt
this distinction as a starting point. From this point of view,
improper mixtures do not admit an ignorance interpretation, as is
the case for proper mixtures (for more discussion on this subject
see also \cite{Kirkpatrik}, \cite{ReplytoKirkpatrik} and
\cite{tanosquaternionicos}). One of the most important consequences
of this fact is that improper mixtures have to be considered as
states on their own right, besides pure states.

Furthermore, the study of the convex set of quantum states plays a
central role in decoherence \cite{Garchauer} and quantum information
processing \cite{horodecki2001}, and this convex set is formed
mainly by mixed states. For a Hilbert space of finite dimension $N$,
pure states form a $2(N-1)$-dimensional subset of the
$(N^{2}-2)$-dimensional boundary of the convex set of states. Thus,
\emph{regions} of the convex set of states are the important objects
to investigate, rather than the lattice of projections.

The standard quantum logical approach to quantum mechanics ($QM$)
\cite{BvN},\cite{dallachiaragiuntinilibro},\cite{jauch},
\cite{piron}, takes the lattice of projections of the Hilbert space
of the system to be the lattice of propositions (see section
\ref{s:introduction to QL} for a brief review). In this framework,
as in classical mechanics ($CM$), the state of the system is in
direct correspondence with the conjunction of all its actual
properties. This conjunction yields a pure state or a ray in the
Hilbert space, which corresponds to an atom of the lattice of
projections. But when we consider the system $S$ formed by
subsystems $S_{1}$ and $S_{2}$, the following problem appears (see
for example \cite{aertsparadox}). If $S$ is in an entangled (pure)
state, the sates of the subsystems become improper mixtures, and so,
it is no longer true that the conjunction of all actual properties
yields the true state of the subsystem. On the contrary, the
conjunction yields a non atomic proposition of the von Newmann
lattice of projections ($\mathcal{L}_{v\mathcal{N}}$), i.e., a
subspace of dimension greater than one \cite{aertsjmp84}, and leaves
the state undetermined. Thus, this procedure does not give the real
physical state of the system any more. We analyze this problem in
detail in section \ref{s:Explicamos por que}, where we review and
extend the discussion posed in \cite{extendedql}, and impose general
conditions for the structures that we want to construct.

The usual way to incorporate improper mixtures (or more generally,
states) in the $QL$ approach, is as measures over
$\mathcal{L}_{v\mathcal{N}}$ (or equivalently, measures over the
propositions of the abstract lattice). So, in the study of cases
which may involve improper mixtures, we have to jump to a different
level than that of the propositions (projections). The physical
propositions belong to $\mathcal{L}_{v\mathcal{N}}$, while mixtures
belong to the set of measures over $\mathcal{L}_{v\mathcal{N}}$. We
claim that this split is at the heart of the problem posed in
\cite{aertsparadox}, and complementarily, in section
\ref{s:Explicamos por que} of this article. We will review this in
detail in sections \ref{s:introduction to QL} and \ref{s:Explicamos
por que}.

In section \ref{s:Explicamos por que} we give a list of conditions
for the structures that we are looking for, in order to grant that
they solve the posed problems. We work out a structure which
incorporates improper mixtures in the same status as pure states
(alike $\mathcal{L}_{v\mathcal{N}}$), i.e., in which all states of
the system are atoms of the lattice. A first approach in this
direction was done in \cite{extendedql}, where an extension of the
von Newmann lattice of projections was built. In that article it was
shown that it is possible to construct a lattice theoretical
framework which incorporates improper mixtures as atoms, denoted by
$\mathcal{L}$ here. We briefly reproduce (without proof), some of
the results of \cite{extendedql} in section \ref{s:New language}.

In this work we extend the construction of \cite{extendedql} to a
larger structure $\mathcal{L}_{\mathcal{C}}$, formed by all convex
subsets of $\mathcal{C}$. This is developed in sections
\ref{s:Convex lattice} and \ref{s:The Relationship for convex}. As
desired, this extension solves the problem possed in section
\ref{s:Explicamos por que}. We think that this approach is suitable
in order to consider decoherence or entangled systems from a quantum
logical and algebraic point of view. Furthermore, taking the convex
set of states as an starting point could be of interest if we take
into account that there exists a formulation of $QM$ in terms of
convex sets (see \cite{MielnikGQS}, \cite{MielnikTF} and
\cite{MielnikGQM}). This is an independent formulation of $QM$ and
has the advantage that it can include models of theories which
cannot be represented by Hilbert spaces, as is the case of non
linear generalizations of quantum mechanics.

Using $\mathcal{L}_{\mathcal{C}}$, we can construct projection
functions from the lattice of the whole system to the lattices of
the subsystems which satisfy, in turn, to be compatible with the
physical description. A similar construction can be made for
$\mathcal{L}$ (see section \ref{s:New language}). Alike the von
Newmann case, where these projection functions do not exist, the
projections defined in $\mathcal{L}$ and $\mathcal{L}_{\mathcal{C}}$
satisfy this condition. They are also canonical in the sense that
they are constructed using partial traces, in accordance with the
quantum formalism.

The approach presented here shows (as well as the one presented in
\cite{extendedql}), the radical difference between quantum mechanics
and classical mechanics when two systems interact, a difference
which is not properly expressed in the orthodox $QL$ approach. When
dealing with classical systems, no enlargement of the lattice of
propositions is needed even in the presence of interactions. The
phase space is sufficient in order to describe all relevant physics
about the subsystems. But the existence of non-trivial correlations
in quantum mechanics forces an enlargement of the state space of
pure states to the convex set $\mathcal{C}$, and so, the enlargement
of $\mathcal{L}_{v\mathcal{N}}$ to a structure as $\mathcal{L}$ or
$\mathcal{L}_{\mathcal{C}}$ (or any other structure which satisfies
conditions listed in section \ref{s:Explicamos por que}).

In section \ref{s:entanglement} we study the maps between
$\mathcal{L}_{\mathcal{C}}$, $\mathcal{L}_{\mathcal{C}_{1}}$ and
$\mathcal{L}_{\mathcal{C}_{2}}$ (the lattices of $S$ and its
subsystems), and show that our construction allows for an algebraic
characterization of entanglement, showing a new feature which is not
so explicit in the standard $QL$ approach. Finally, in section
\ref{s:Conclusions} we expose our conclusions.

\section{The Convex Set of States and Improper Mixtures}\label{s:introduction to QL}

Let us review first the quantum logical approach to the description
of physical systems (see for example
\cite{dallachiaragiuntinilibro}). In the standard $QL$ approach,
properties (or propositions) of a quantum system are in
correspondence with closed subspaces of a Hilbert space
$\mathcal{H}$. The set of subspaces ${\mathcal{P}}({\mathcal{H}})$
with the partial order defined by set inclusion $\subseteq$,
intersection of subspaces $\cap$ as the lattice meet, closed linear
spam of subspaces $\oplus$ as the lattice join and
orthocomplementation $^{\bot}$ as lattice complement, gives rise to
an orthomodular lattice
${\mathcal{L}}_{v\mathcal{N}}({\mathcal{H}})=
<{\mathcal{P}}({\mathcal{H}}),\ \cap,\ \oplus,\ \neg,\ 0,\ 1>$ where
$0$ is the empty set $\emptyset$ and $1$ is the total space
$\mathcal{H}$. This is the Hilbert lattice, named $QL$ by Birkhoff
and von Neumann. We will refer to this lattice as
${\mathcal{L}}_{vN}$, the `von Neumann lattice' (or simply
$\mathcal{L}(\mathcal{H})$).

Mixed states represented by density operators had a secondary role
in the classical treatise by von Newmann because they did not add
new conceptual features to pure states. In fact, in his book,
mixtures meant ``statistical mixtures'' of pure states {\rm
\cite[pg. 328]{vN}}, which are known in the literature as ``proper
mixtures'' {\rm \cite[Ch. 6]{d'esp}}. They usually represent the
states of realistic physical systems whose preparation is not well
described by pure states. In the standard formulation of $QL$,
mixtures (as well as pure states) are included as measures over the
lattice of projections\cite{mikloredeilibro}, that is, a state $s$
is a function:

\begin{equation}
s:\mathcal{L}(\mathcal{H})\longrightarrow [0;1]
\end{equation}

\noindent such that:

\begin{enumerate}
\item $s(\textbf{0})=0$ ($\textbf{0}$ is the null subspace).

\item For any pairwise orthogonal family of projections ${P_{j}}$,
$s(\sum_{j}P_{j})=\sum_{j}s(P_{j})$
\end{enumerate}

But while pure states can be put in a bijective correspondence to
the atoms of $\mathcal{L}(\mathcal{H})$, this is not the case for
mixtures. We review in Section \ref{s:Explicamos por que} how this
difference leads to problems when compound systems are considered.
We must pay attention to improper mixtures (\cite{d'esp},
\cite{Mittelstaedt}) because we have to deal with them in each (non
trivial) case in which a part of the system is considered.

For a classical system with phase space $\Gamma$, the lattice of
propositions is defined as the the set of subsets of $\Gamma$
($\mathcal{P}(\Gamma)$), endowed with set intersection as
conjunction ``$\wedge$", set union as disjunction ``$\vee$" and set
complement as negation ``$\neg$". We will call this lattice
$\mathcal{L}_{\mathcal{C}\mathcal{M}}$. The points $(p,q)\in\Gamma$
are in a bijective correspondence with the states of the system.
Statistical mixtures are represented as measurable functions:

\begin{equation}\label{classical statisticalmixture}
\sigma:\Gamma\longrightarrow [0;1]
\end{equation}

\noindent such that $\int_{\Gamma}\sigma(p,q)d^{3}pd^{3}q=1$.

For quantum compound systems $S_{1}$ and $S_{2}$, given the Hilbert
state spaces $\mathcal{H}_{1}$ and $\mathcal{H}_{2}$ as
representatives of two systems, the pure states of the compound
system are given by rays in the tensor product space
$\mathcal{H}=\mathcal{H}_{1}\otimes\mathcal{H}_{2}$. It is not true
that any pure state of the compound system factorizes after the
interaction in pure states of the subsystems. This situation is very
different from that of classical mechanics, where for state spaces
$\Gamma_{1}$ and $\Gamma_{2}$, we assign
$\Gamma=\Gamma_{1}\times\Gamma_{2}$ for the compound system.

Let us now briefly review the relationship between the states of the
joint system and the states of the subsystems in the quantum
mechanical case. Let us focus for simplicity on the case of two
systems, $S_{1}$ and $S_{2}$. If $\{|x_{i}^{(1)}\rangle\}$ and
$\{|x_{i}^{(2)}\rangle\}$ are the corresponding orthonormal basis of
$\mathcal{H}_{1}$ and $\mathcal{H}_{1}$ respectively, then the set
$\{|x_{i}^{(1)}\rangle\otimes|x_{j}^{(2)}\rangle\}$ forms an
orthonormal basis for $\mathcal{H}_{1}\otimes\mathcal{H}_{2}$. A
general (pure) state of the composite system can be written as:

\begin{equation}
\rho=|\psi\rangle\langle\psi|
\end{equation}

\noindent where
$|\psi\rangle=\sum_{i,j}\alpha_{ij}|x_{i}^{(1)}\rangle\otimes|x_{j}^{(2)}\rangle$.
And if $M$ represents an observable, its mean value $\langle
M\rangle$ is given by:

\begin{equation}\label{e:meanvalueoperator}
\mbox{tr}(\rho M)=\langle M\rangle
\end{equation}

If observables of the form $O_{1}\otimes\mathbf{1}_{2}$ and
$\mathbf{1}_{1}\otimes O_{2}$ (with $\mathbf{1}_{1}$ and
$\mathbf{1}_{2}$ the identity operators over $\mathcal{H}_{1}$ and
$\mathcal{H}_{2}$ respectively) are considered, then partial state
operators $\rho_{1}$ and $\rho_{2}$ can be defined for systems
$S_{1}$ and $S_{2}$. The relation between $\rho$, $\rho_{1}$ and
$\rho_{2}$ is given by:

\begin{equation}
\rho_{1}=tr_{2}(\rho) \ \ \ \ \rho_{2}=tr_{1}(\rho)
\end{equation}

\noindent where $tr_{i}$ stands for the partial trace over the $i$
degrees of freedom.  It can be shown that:

\begin{equation}
tr_{1}(\rho_{1}O_{1}\otimes\mathbf{1}_{2})=\langle O_{1}\rangle
\end{equation}

\noindent and that a similar equation holds for $S_{2}$. Operators
of the form $O_{1}\otimes\mathbf{1}_{2}$ and $\mathbf{1}_{1}\otimes
O_{2}$ represent magnitudes related to $S_{1}$ and $S_{2}$
respectively. When $S$ is in a product state
$|\varphi_{1}\rangle\otimes|\varphi_{2}\rangle$, the mean value of
the product operator $O_{1}\otimes O_{2}$ will yield:

\begin{equation}
\mbox{tr}(|\varphi_{1}\rangle\otimes|\varphi_{2}\rangle\langle\varphi_{1}
|\otimes\langle\varphi_{2}|O_{1}\otimes O_{2})=\langle
O_{1}\rangle\langle O_{2}\rangle
\end{equation}

\noindent reproducing statistical independence.

Mixtures are represented by positive, Hermitian and trace one
operators, (also called `density matrices'). The set of all density
matrixes forms a convex set (of states), which we will denote by
$\mathcal{C}$. Remember that the physical observables are
represented by elements of $\mathcal{A}$, the $\mathbb{R}$-vector
space of Hermitian operators acting on $\mathcal{H}$:

\begin{definition}\label{d:hermitian}
$\mathcal{A}:=\{ A\in B(\mathcal{H})\,|\, A=A^{\dagger} \}$
\end{definition}

\begin{definition}\label{d:mathcalC}
$\mathcal{C}:=\{\rho\in\mathcal{A}\,|\,\mbox{tr}(\rho)=1,\,\rho\geq
0\}$
\end{definition}

\noindent where $B(\mathcal{H})$ stands for the algebra of bounded
operators in $\mathcal{H}$. The set of pure states satisfies:

\begin{equation}
P:=\{\rho\in\mathcal{C}\,|\, \rho^{2}=\rho\}
\end{equation}

This set is in correspondence with the rays of $\mathcal{H}$ by the
association:

\begin{equation}
\mathcal{F}:\mathbb{C}\mathbb{P}(\mathcal{H})\longmapsto \mathcal{C}
\quad|\quad [|\psi\rangle]\longmapsto|\psi\rangle\langle\psi|
\end{equation}

\noindent where $\mathbb{C}\mathbb{P}(\mathcal{H})$ is the
projective space of $\mathcal{H}$, and $[|\psi\rangle]$ is the class
defined by the vector $|\psi\rangle$
($|\varphi\rangle\sim|\psi\rangle\longleftrightarrow|\varphi\rangle=\lambda|\psi\rangle$,
$\lambda\neq 0$). $\mathcal{C}$ is a convex set inside the
hyperplane $\{\rho\in\mathcal{A}\,|\,\mbox{tr}(\rho)=1\}$ formed by
the intersection of this hyperplane with the cone of positive
matrixes.

Separable states are defined \cite{Werner} as those states of
$\mathcal{C}$ which can be written as a convex combination of
product states:

\begin{equation}
\rho_{Sep}=\sum_{i,j}\lambda_{ij}\rho_{i}^{(1)}\otimes\rho_{j}^{(2)}
\end{equation}

\noindent where $\rho_{i}^{(1)}\in\mathcal{C}_{1}$ and
$\rho_{j}^{(2)}\in\mathcal{C}_{2}$, $\sum_{i,j}\lambda_{ij}=1$ and
$\lambda_{ij}\geq 0$. So, the set $\mathcal{S}(\mathcal{H})$ of
separable states is defined as:

\begin{definition}
$\mathcal{S}(\mathcal{H}):=\{\rho\in\mathcal{C}\,|\,\rho\,\,
\mbox{is separable}\}$
\end{definition}

As said above, it is a remarkable fact that there are many states in
$\mathcal{C}$ which are non separable. If the state is
non-separable, it is said to be entangled
\cite{bengtssonyczkowski2006}. The estimation of the volume of
$\mathcal{S}(\mathcal{H})$ is of great interest (see --among
others--\cite{zyczkowski1998}, \cite{horodecki2001} and
\cite{aubrum2006}).

In classical mechanics  mixtures are not of a fundamental nature.
They represent an state of ignorance of the observer, because we
know in principle that the system is in a given state $s$ of phase
space. On he other hand, in quantum mechanics, we must take into
account the difference between proper and improper mixtures. A
proper mixture can be considered as a density matrix \emph{plus} a
piece of classical information, which encodes classical
probabilities for preparations of ensembles of pure states. This
extra piece of classical information may have its source on
imperfections of the preparation procedure, or could be produced
deliberately, but the important fact is that these probabilities
could be determined -at least- in principle, as is the case of
mixtures in classical mechanics. But for the case of improper
mixtures, this information does not exist in the world.

What is one of the main implications of the fact that improper
mixtures do not admit an ignorance interpretation? For the standard
formulation of $QM$ we have at hand what it is usually called ``the
superposition principle":

\begin{principle}\label{e:superposition principle}
Superposition Principle. If $|\psi_{1}\rangle$ and
$|\psi_{1}\rangle$ are physical states, then
$\alpha|\psi_{1}\rangle+\beta|\psi_{1}\rangle$
($|\alpha|^{2}+|\beta|^{2}=1$) will be a physical state too.
\end{principle}

Are there other operations which allows us to form new states up
from two given states? If we accept that improper mixtures are
states of a fundamental nature as much as pure states do, then, the
fact that we can create new physical states \emph{mixing} two given
states, could be thought as a principle which stands besides the
superposition principle:

\begin{principle}\label{e:mixing principle}
Mixing Principle. If $\rho$ and $\rho'$ are physical states, then
$\alpha\rho+\beta\rho'$ ($\alpha+\beta=1$, $\alpha,\beta\geq 0$)
will be a physical state too.
\end{principle}

Mixing principle is not contained directly in the superposition
principle. Mixing principle appears as a consequence of the axiom
which states that to a compound system corresponds the tensor
product of Hilbert spaces. It expresses the fact that improper
mixtures are physical states. We will not consider proper mixtures
in this work, we only concentrate in physical states.

There is a remarkable physical consequence of all this (which we
think is not properly emphasized in the literature). While for pure
states there always exist ``true propositions" \cite{piron}, i.e.,
propositions for which a test will yield the answer ``yes" with
certainty (and a similar situation for ``false propositions"), the
situation is radically different for improper mixtures. If we accept
that improper mixtures are states of a fundamental nature as well as
pure states, then we must face the fact that there exist states for
which no ``true propositions" exist (discarding the trivial
proposition represented by the Hilbert space itself). This is the
case for example for the maximum uncertainty state (finite
dimension), $\rho=\frac{1}{N}\mathbf{1}$.




\begin{figure}\label{f:going down on classical systems}
\begin{center}
\unitlength=1mm
\begin{picture}(5,5)(0,0)
\put(-15,23){\vector(-1,-1){20}} \put(15,23){\vector(1,-1){20}}
\put(0,25){\makebox(0,0){$\mathcal{L}_{\mathcal{CM}}=\mathcal{L}_{\mathcal{CM}_{1}}\times\mathcal{L}_{\mathcal{CM}_{2}}$}}
\put(-37,0){\makebox(0,0){$\mathcal{L}_{\mathcal{CM}_{1}}$}}
\put(37,0){\makebox(0,0){$\mathcal{L}_{\mathcal{CM}_{2}}$}}
\put(-25,16){\makebox(0,0){$\pi_{1}$}}
\put(25,16){\makebox(0,0){$\pi_{2}$}}
\end{picture}
\caption{In the classical case, we can map a state of the system to
the states of the subsystems using the set-theoretical projections
$\pi_{1}$ and $\pi_{2}$}.
\end{center}
\end{figure}
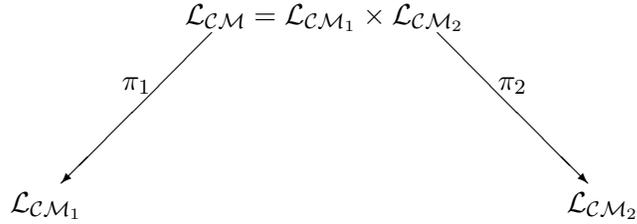

\section{The Limits of $\mathcal{L}_{v\mathcal{N}}$}\label{s:Explicamos por que}

In the standard $QL$ approach there is a bijective correspondence
between the atoms of the lattice ${\mathcal{L}}_{v\mathcal{N}}$ and
pure states. Each pure state $s$, which is represented by a ray
$[|\psi\rangle]$, is an atom of the lattice of projections
${\mathcal{L}}_{v\mathcal{N}}$. The relationship between the atoms
of the lattice and actual properties $p$ of the system (represented
by closed subspaces of $\mathcal{H}$) is given by:

\begin{equation}\label{e:ydelaspropiedades}
\{s\}=\wedge\{p\in \mathcal{L}_{v\mathcal{N}}\,|\,
p\,\,\mbox{is}\,\,\mbox{actual}\}
\end{equation}

\noindent and a similar relation holds in the classical case. But
there appears a subtle problem with equation
\ref{e:ydelaspropiedades} when compound systems are considered.
Suppose that $S_{1}$ and $S_{2}$ are subsystems of a larger system
$S$ in a pure entangled state $|\psi\rangle$. What happens if we
want to determine the states of its subsystems using equation
\ref{e:ydelaspropiedades}? This problem is studied in
\cite{aertsparadox}. If we make the conjunction of all actual
properties for, say $S_{1}$, we will no longer obtain an atom of
$\mathcal{L}_{v\mathcal{N}_{1}}$. Instead of it, we will obtain a
property which corresponds to a subspace of dimension greater than
one (see theorem $18$ of \cite{aertsjmp84}). But this property does
no longer corresponds to the actual state of the subsystem. The
state of the subsystem is not a pure state, but an improper mixture
given by the partial trace
$\mbox{tr}_{2}(|\psi\rangle\langle\psi|)$. So, it is not possible to
obtain the actual physical state of $S_{1}$ using the properties of
$\mathcal{L}_{v\mathcal{N}_{1}}$ and equation
\ref{e:ydelaspropiedades}.

Consider figures \ref{f:going down on classical systems} and
\ref{f:traces do not work}. Alike the classical case, in general, we
will not be able to map states from $\mathcal{L}_{v\mathcal{N}}$
into states of $\mathcal{L}_{v\mathcal{N}i}$ ($i=1,2$) using partial
traces (which are the physical maps that we \emph{should} use). This
is because the states of the subsystems are represented by improper
mixtures which are not projections, and thus, they do not belong to
$\mathcal{L}_{v\mathcal{N}i}$ ($i=1,2$). How can we complete the
``?" symbols of figure \ref{f:traces do not work}? There is no way
to do that at the lattices level (when we use von Newmann lattices).
We should have to jump into the level of measures over
$\mathcal{L}_{v\mathcal{N}_{1}}$. But in this work we want to avoid
this possibility.

As said above, statistical mixtures of $CM$ do not have a
\emph{fundamental nature}; on the contrary, they represent a state
of knowledge of the observer, a loss of information. Alike classical
mixtures, improper mixtures are of a fundamental nature; according
with the orthodox interpretation of $QM$, there is no other
information available about the state of the subsystem. As well as
pure states (and points of classical phase space), they represent
pieces of information that are maximal and logically complete. They
cannot be consistently extended and they decide any property or the
result of any experiment on the given subsystem. They
\emph{determine the physics of the subsystem}. Due to this, we want
to consider improper mixtures as states in the same level to that of
pure states, i.e., we want that they belong to the same
propositional structure.

The strategy that we follow in this work is to search for structures
which contain improper mixtures in such a way that they have an
equal treatment as the one given to pure states. As we will see,
this is possible, and such structures can be defined in a natural
way, extending (in a sense explained below)
$\mathcal{L}_{v\mathcal{N}}$ and in a way which is compatible with
the physics of compound quantum systems.

We want to avoid the fact that actual properties of the
propositional system do not determine the state of the system,
understood as the state of affairs which determines its physics. We
think that every reasonable notion of \emph{physical} state in a
propositional system should satisfy equation
\ref{e:ydelaspropiedades}.

As remarked above, there are improper mixtures for which all yes-no
tests are uncertain. But it is important to remark that this does
not imply that the system has no testable properties at all. Making
quantum state tomographies we can determine the state of the system.
These kind of ``tests" however, are of a very different nature than
that of the yes-no experiments. But the only thing that we care
about is that of the reality of physical process and our capability
of experimentally test this reality. We search for structures which
reflect this physics in a direct way.

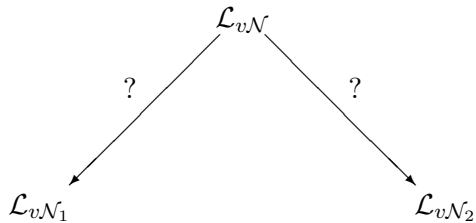
\begin{figure}\label{f:traces do not work}
\begin{center}
\unitlength=1mm
\begin{picture}(5,5)(0,0)
\put(-3,23){\vector(-1,-1){20}} \put(3,23){\vector(1,-1){20}}
\put(0,25){\makebox(0,0){$\mathcal{L}_{v\mathcal{N}}$}}
\put(-27,0){\makebox(0,0){${\mathcal{L}_{v\mathcal{N}_{1}}}$}}
\put(27,0){\makebox(0,0){${\mathcal{L}_{v\mathcal{N}_{2}}}$}}
\put(-15,16){\makebox(0,0){$?$}} \put(15,16){\makebox(0,0){$?$}}
\end{picture}
\caption{We cannot apply partial traces in order to go down from
$\mathcal{L}_{v\mathcal{N}}$ to $\mathcal{L}_{v\mathcal{N}_{1}}$,and
$\mathcal{L}_{v\mathcal{N}_{2}}$. How do we complete the ``?"
symbols?}
\end{center}
\end{figure}

There are other reasons for considering structures which contains
improper mixtures in a same status as that of pure states. There are
a lot of studies of interest which concentrate on mixtures. For
example, this is the case in quantum decoherence, quantum
information processing, or the independent generalizations of
quantum mechanics which emphasize the convex nature of mechanics
(not necessarily equivalent to ``Hilbertian" $QM$). The set of
interest in these studies is $\mathcal{C}$ instead of the lattice of
projections. So it seems to be adequate to study structures which
include improper mixtures as well as pure states in a same level of
``discourse". Such structures could provide a natural framework in
which we study foundational issues related to these topics.

Let us see examples of physical situations which could be captured
by propositional structures based on $\mathcal{C}$. Suppose that we
have a system $S_{1}$ in a given state $\rho_{1}$ (which can be an
improper mixture). If we consider its environment $S_{2}$, then we
may state the proposition ``the state of affairs is such that
$S_{1}$ is in state $\rho_{1}$". We note that when we look the
things from the point of view of the total system
$S=\mbox{System}+\mbox{Environment}$, a convex subset of
$\mathcal{C}$ (the convex set of states of $S$) corresponds to this
proposition. This is so because $S$ can be in any state $\rho$ such
that $\mbox{tr}_{2}(\rho)=\rho_{1}$, and this corresponds to the
convex set $\mbox{tr}^{-1}_{2}(\{\rho_{1}\})$ (see section
\ref{s:inversetaumap}). Similarly, we obtain the convex set
$\mbox{tr}^{-1}_{2}(\{\rho_{1}\})\cap\mbox{tr}^{-1}_{1}(\{\rho_{2}\})$
for the proposition $S_{1}$ is in state $\rho_{1}$ and $S_{2}$ is in
state $\rho_{2}$. This propositions represent the ignorance that we
have about the actual state of the whole system. A propositional
structure which includes propositions of this kind could be useful,
and more natural for the study of quantum information.

It is important to notice that propositions such as the one
represented by $\mbox{tr}^{-1}_{2}(\{\rho_{1}\})$ above cannot be
tested by yes-no experiments in general. Notwithstanding, they
represent actual states of affairs, and they can certainly be tested
making measures on correlations, quantum tomographies, etc.

As another example, consider the von Newmann entropy
$S(\rho)=-\mbox{tr}(\rho\ln(\rho))$. It has the following property
of concavity \cite{bengtssonyczkowski2006}

\begin{prop}\label{p:concavity}
If $\rho=\alpha\rho_{1}+(1-\alpha)\rho_{2}$, $0\leq\alpha\leq 1$, we
have

\begin{equation}
S(\rho)\geq\alpha S(\rho_{1})+(1-\alpha)S(\rho_{2})
\end{equation}

\end{prop}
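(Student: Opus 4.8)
The plan is to obtain the inequality as a short consequence of the non-negativity of the quantum relative entropy (Klein's inequality), which I would first isolate as a lemma. Writing $\eta(x)=-x\ln x$ for $x\geq 0$ (with $\eta(0)=0$), so that $S(\rho)=\mbox{tr}\,\eta(\rho)$, the lemma states that for density operators $\rho,\sigma$ with $\mbox{supp}(\rho)\subseteq\mbox{supp}(\sigma)$ one has $\mbox{tr}(\rho\ln\rho)\geq\mbox{tr}(\rho\ln\sigma)$. To prove it I would diagonalize $\rho=\sum_i p_i|e_i\rangle\langle e_i|$ and $\sigma=\sum_j q_j|f_j\rangle\langle f_j|$, introduce the doubly stochastic matrix $c_{ij}=|\langle e_i|f_j\rangle|^2$, and expand
\begin{equation}
\mbox{tr}(\rho\ln\rho)-\mbox{tr}(\rho\ln\sigma)=\sum_i p_i\ln p_i-\sum_{i,j}c_{ij}\,p_i\ln q_j=\sum_i p_i\left(\ln p_i-\sum_j c_{ij}\ln q_j\right).
\end{equation}
Jensen's inequality for the concave logarithm gives $\sum_j c_{ij}\ln q_j\leq\ln r_i$ with $r_i:=\sum_j c_{ij}q_j$, and $\sum_i r_i=1$ because $c$ is doubly stochastic; the support hypothesis guarantees $r_i>0$ whenever $p_i>0$. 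Hence the left-hand side is bounded below by the classical relative entropy $\sum_i p_i\ln(p_i/r_i)$, which is non-negative by Gibbs' inequality (use $\ln x\geq 1-1/x$).

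With the lemma available, set $\sigma=\rho=\alpha\rho_1+(1-\alpha)\rho_2$. Since $\alpha,1-\alpha\geq 0$, we automatically have $\mbox{supp}(\rho_k)\subseteq\mbox{supp}(\rho)$, so Klein's inequality applies to each pair $(\rho_k,\rho)$ and yields $-S(\rho_k)-\mbox{tr}(\rho_k\ln\rho)\geq 0$ for $k=1,2$. Multiplying the $k=1$ inequality by $\alpha$, the $k=2$ inequality by $1-\alpha$, adding, and using linearity of the trace together with $\alpha\rho_1+(1-\alpha)\rho_2=\rho$ gives
\begin{equation}
\alpha S(\rho_1)+(1-\alpha)S(\rho_2)\leq-\mbox{tr}(\rho\ln\rho)=S(\rho),
\end{equation}
which is the assertion.

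The main obstacle is genuinely the non-commutativity of $\rho_1$ and $\rho_2$ (equivalently of each $\rho_k$ with $\rho$): simultaneous diagonalization is unavailable, and essentially all the content sits in passing through the doubly stochastic matrix $c_{ij}$ and applying Jensen in the proof of Klein's inequality. A secondary technical point is the behaviour of $\eta$ and $\ln$ on kernels; restricting to the common support, which is legitimate here precisely because the mixing weights are non-negative, removes it. An alternative would be to adjoin an ancilla with orthonormal states $|1\rangle,|2\rangle$, form the block state $\alpha\rho_1\otimes|1\rangle\langle 1|+(1-\alpha)\rho_2\otimes|2\rangle\langle 2|$, and read the inequality off from subadditivity of the von Neumann entropy; but this only relocates the same non-commutativity difficulty into the proof of subadditivity, so I would favour the relative-entropy route.
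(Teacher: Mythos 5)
Your argument is correct. For the record, the paper does not actually prove this proposition: it quotes concavity of the von Neumann entropy as a known property with a citation to Bengtsson and \.{Z}yczkowski, so there is no in-text proof to compare against. What you supply is the standard self-contained derivation: Klein's inequality $\mbox{tr}(\rho\ln\rho)\geq\mbox{tr}(\rho\ln\sigma)$ proved via the doubly stochastic matrix $c_{ij}=|\langle e_i|f_j\rangle|^2$, Jensen for $\ln$, and Gibbs' inequality, followed by the specialization $\sigma=\alpha\rho_1+(1-\alpha)\rho_2$ and a convex combination of the two resulting inequalities. All the steps check out, including the support bookkeeping: if $q_j=0$ and $p_i>0$ then the inclusion $\mbox{supp}(\rho)\subseteq\mbox{supp}(\sigma)$ forces $c_{ij}=0$, so no infinite terms survive. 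One hair worth splitting: your claim that $\alpha,1-\alpha\geq 0$ ``automatically'' gives $\mbox{supp}(\rho_k)\subseteq\mbox{supp}(\rho)$ fails at the endpoints $\alpha=0$ and $\alpha=1$ (e.g.\ $\mbox{supp}(\rho_1)$ need not lie in $\mbox{supp}(\rho_2)$ when $\alpha=0$), but there the proposition is a trivial equality for the surviving term and the offending inequality is multiplied by zero, so nothing breaks; it would be cleaner to dispose of the endpoints first and assume $0<\alpha<1$. Your closing remark is also apt: the alternative block-diagonal/ancilla route via subadditivity proves the same thing, but subadditivity itself is usually derived from the very relative-entropy positivity you prove, so the direct route is the economical one here.
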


\noindent Now consider the proposition ``the entropy of the system
is greater than $S_{0}$". To such a proposition -which has a very
definite physical meaning- there corresponds a convex subset of
$\mathcal{C}$. This is so, because if we consider the set

\begin{equation}
S_{\geq S_{0}}=\{\rho\in\mathcal{C}\,|\,S(\rho)\geq S_{0}\}
\end{equation}

\noindent and if $\rho_{1},\rho_{2}\in S_{\geq S_{0}}$, then any
convex combination $\rho=\alpha\rho_{1}+(1-\alpha)\rho_{2}$ -due to
the concavity property- will also belong to $S_{\geq S_{0}}$. This
example shows that there are propositions with a very clear physical
meaning which correspond to subsets of $\mathcal{C}$ instead of
subspaces of the Hilbert space.


We summarize below the desired properties for the lattices that we
are searching for, in order to solve the problems posed in this
section:

\begin{enumerate}

\item[$1$]
All physical states are included as atoms of the the new lattice.
Atoms and physical states are in one to one correspondence.

\item[$2$]
A state of the system will be the conjunction of all the actual
properties (i.e. elements of the structure). This means that actual
properties determine univocally the state of the system.

\item[$3$]
There exist projection functions which map \emph{all} states (atoms)
of the structure corresponding to the whole system $S$, to the
corresponding states (atoms) of its subsystems $S_{1}$ and $S_{2}$.
This assignation rule must be compatible with the physics of the
problem.

\item[$4$]
$\mathcal{L}_{v\mathcal{N}}$ is set theoretically included in the
new structure, in order to preserve physical properties in the
standard sense.

\item[$5$]
Given two propositions of the structure there must exist an
operation which yields a proposition which expresses the fact that
we can form mixtures of states.

\end{enumerate}

There is a trivial example which satisfies the conditions $1-4$
listed above, namely, the set of all subsets of $\mathcal{C}$, which
we denote $\mathcal{P}(\mathcal{C})$. Using set intersection as
conjunction and set union as disjunction, it is a boolean lattice.
If we fix an entanglement measure, consider the proposition ``the
system has such amount of entanglement" or given an entropy measure,
we can say ```the system has such amount of entropy" and so on. To
such propositions we can assign elements of
$\mathcal{P}(\mathcal{C})$, the set of all states which satisfy
those propositions. But the boolean ``or" defined by the set union
hides the fact that in quantum mechanics we can make superpositions
of states (principle \ref{e:superposition principle}) and that we
can mix states (principle \ref{e:mixing principle}). In this work we
search for structures which satisfy condition $5$. For that reason,
the lattice formed by $\mathcal{P}(\mathcal{C})$ (from now on
$\mathcal{L}_{\mathcal{B}}$) is not of our interest. It expresses
the almost trivial fact that we can make propositions such as ``the
states of $\mathcal{C}$ which make a given function to have such a
value" but it hides the radical differences between $QM$ and $CM$.

We can define -at least- two structures which satisfy the above
list. One of them which we call $\mathcal{L}$ (see section
\ref{s:New language} and \cite{extendedql}) is in close connection
with the lattice of subspaces of the space of hermitian matrixes. In
this work it plays the role of a technical step to reach
$\mathcal{L}_{\mathcal{C}}$ (section \ref{s:Convex lattice}), the
lattice formed by the convex subsets of $\mathcal{C}$. We show below
that the study of these structures sheds light on the study of
compound quantum systems, and provide a suitable (natural) language
for them, mainly because of they sort the problems possed above.
They show things that $\mathcal{L}_{v\mathcal{N}}$ hides, or in
other words, which are not expressed clearly. For example, given two
pure states $\rho_{1}=|\psi_{1}\rangle\langle\psi_{1}|$ and
$\rho_{2}=|\psi_{2}\rangle\langle\psi_{2}|$ we can apply the ``or"
operation of $\mathcal{L}_{v\mathcal{N}}$,
$\vee_{\mathcal{L}_{v\mathcal{N}}}$, which yields the linear
(closed) spam of $|\psi_{1}\rangle$ and $|\psi_{2}\rangle$. But we
can also consider the ``$\vee_{\mathcal{L}_{\mathcal{C}}}$"
operation (see section \ref{s:Convex lattice}), which yields all
statistical mixtures of the form
$\alpha\rho_{1}+(1-\alpha)\rho_{2}$. This operation is different
from linear combination (quantum superpositions), and is related to
the -non classical- mixing of states (improper mixing). This
``mixing" operation cannot be represented in $QL$ at the level of
$\mathcal{L}_{v\mathcal{N}}$ itself, i.e., it is not a lattice
operation, but it has to be represented at the level of statistical
mixtures (measures over $\mathcal{L}_{v\mathcal{N}}$).

It is important to notice that it is not the aim of this work to
replace the von Newmann lattice by these new structures, but to
stress its limitations for the problem of compound systems and to
define its domain of applicability. We adopt the point of view that
these constructions -including $\mathcal{L}_{v\mathcal{N}}$- yield
different complementary views of quantum systems. In the following
sections, we present $\mathcal{L}$ and $\mathcal{L}_{\mathcal{C}}$.

\section{The Lattice $\mathcal{L}$}\label{s:New language}

In this section we review (without proof) some results and
definitions of \cite{extendedql}. Let us define $G(\mathcal{A})$ as
the lattice associated to the pair $(\mathcal{A},\mbox{tr})$, where
$\mathcal{A}$ is considered as an $\mathbb{R}$-vector space and
$\mbox{tr}$ is the usual trace operator on $B(\mathcal{H})$, which
induces the scalar product $<A,B>=tr(A\cdot B)$
($\dim(\mathcal{H})<\infty$). The restriction to $\mathcal{A}$ of
$\mbox{tr}$, makes $\mathcal{A}$ into an $\mathbb{R}$-euclidean
vector space.

\begin{equation}
G(\mathcal{A}):=\{S\subset \mathcal{A}\,|\, S\mbox{ is a }
\mathbb{R}-\mbox{subspace}\}
\end{equation}

\noindent $G(\mathcal{A})$ is a modular, orthocomplemented, atomic
and complete lattice (not distributive, hence not a Boolean
algebra). Let $\mathcal{L}$ be the induced lattice in $\mathcal{C}$:

\begin{equation}
\mathcal{L}:=\{S\cap\mathcal{C}\,|\, S\in G(\mathcal{A})\}
\end{equation}

\noindent There are a lot of subspaces $S\in G(\mathcal{A})$ such
that $S\cap \mathcal{C}=S'\cap \mathcal{C}$, so for each
$L\in\mathcal{L}$ we choose as a representative the subspace with
the least dimension:

\begin{equation}
\min\{\dim_{ \mathbb{R} }(S)\,|\,L=S\cap\mathcal{C},\,S\in
G(\mathcal{A})\}
\end{equation}

\noindent Let $[S]=L$, being  $S\in G(\mathcal{A})$ an element of
the class $L$, then

\begin{equation}
S\cap\mathcal{C}\subseteq<S\cap\mathcal{C}>_{\mathbb{R}}\subseteq
S\Rightarrow
S\cap\mathcal{C}\cap\mathcal{C}\subseteq<S\cap\mathcal{C}>_{\mathbb{R}}\cap\mathcal{C}
\subseteq
S\cap\mathcal{C}\Rightarrow
\end{equation}

\begin{equation}
<S\cap\mathcal{C}>\cap\mathcal{C}=S\cap\mathcal{C}
\end{equation}

\noindent So $<S\cap\mathcal{C}>$ and $S$ are in the same class $L$.
Note that $<S\cap\mathcal{C}>\subseteq S$ and if $S$ is the subspace
with the least dimension, then $<S\cap\mathcal{C}>=S$. Also note
that the representative with least dimension is unique, because if
we choose $S'$ such that $S'\cap\mathcal{C}=S\cap\mathcal{C}$, then

\begin{equation}
S=<S\cap\mathcal{C}>=<S'\cap\mathcal{C}>=S'
\end{equation}

\noindent Finally, the representative of a class $L$ that we choose
is the unique $\mathbb{R}$-subspace $S\subseteq\mathcal{A}$ such
that

\begin{equation}
S=<S\cap\mathcal{C}>_{\mathbb{R}}
\end{equation}

\noindent We call it the {\it good representative}. It is important
to remark that in the case of infinite dimensional Hilbert spaces we
cannot define good representatives in such a way.

Let us now define ``$\vee$", ``$\wedge$" and ``$\neg$" operations
and a partial ordering relation ``$\longrightarrow$" (or
equivalently ``$\leq$") in $\mathcal{L}$ as:

\begin{equation}
(S\cap\mathcal{C}) \wedge (T\cap\mathcal{C}):=
<S\cap\mathcal{C}>\cap<T\cap\mathcal{C}>\cap\mathcal{C}
\end{equation}

\begin{equation}
(S\cap\mathcal{C}) \vee (T\cap\mathcal{C}):=
(<S\cap\mathcal{C}>+<T\cap\mathcal{C}>)\cap\mathcal{C}
\end{equation}

\begin{equation}
(S\cap\mathcal{C}) \longrightarrow
(T\cap\mathcal{C})\Longleftrightarrow
(S\cap\mathcal{C})\subseteq(T\cap\mathcal{C})
\end{equation}

\begin{equation}
\neg(S\cap\mathcal{C}):=<S\cap\mathcal{C}>^\perp\cap\mathcal{C}
\end{equation}

With these operations, we have that

\begin{prop}
$\mathcal{L}$ is an atomic and complete lattice. If
$\dim(\mathcal{H})<\infty$, $\mathcal{L}$ is a modular lattice.
\end{prop}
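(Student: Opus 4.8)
The plan is to verify the lattice axioms for $\mathcal{L}$ using the good-representative description established above, and then treat atomicity, completeness and (in finite dimension) modularity by transferring the corresponding facts from $G(\mathcal{A})$. First I would record the basic dictionary: every element of $\mathcal{L}$ has a unique good representative $S = \langle S\cap\mathcal{C}\rangle_{\mathbb{R}}$, and the map $L \mapsto (\text{good representative of }L)$ is an order isomorphism between $(\mathcal{L}, \longrightarrow)$ and the sub-poset of $G(\mathcal{A})$ consisting of those subspaces that are linear spans of their intersection with $\mathcal{C}$. Using this, I would check that $\wedge$ and $\vee$ as defined really are the meet and join: for $\wedge$ one shows $\langle S\cap\mathcal{C}\rangle \cap \langle T\cap\mathcal{C}\rangle \cap \mathcal{C}$ is contained in both $S\cap\mathcal{C}$ and $T\cap\mathcal{C}$ and is the largest such element of $\mathcal{L}$; for $\vee$ one shows $(\langle S\cap\mathcal{C}\rangle + \langle T\cap\mathcal{C}\rangle)\cap\mathcal{C}$ contains both and is the smallest element of $\mathcal{L}$ above both. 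The absorption, commutativity and associativity laws then follow formally, so $\mathcal{L}$ is a lattice with $0 = \emptyset$ (or $\{0\}\cap\mathcal{C}$, which is empty since $\operatorname{tr}0 = 0 \neq 1$) and $1 = \mathcal{C}$.

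For completeness I would extend the binary $\vee$ and $\wedge$ to arbitrary families in the obvious way, $\bigwedge_i L_i := \big(\bigcap_i \langle S_i\cap\mathcal{C}\rangle\big)\cap\mathcal{C}$ and $\bigvee_i L_i := \big(\overline{\sum_i \langle S_i\cap\mathcal{C}\rangle}\big)\cap\mathcal{C}$, and check these are the infimum and supremum; since $G(\mathcal{A})$ is complete (in finite dimension this is immediate, arbitrary intersections and sums of subspaces being subspaces) this goes through. For atomicity, the atoms of $\mathcal{L}$ should be exactly the singletons $\{\rho\}$ with $\rho\in\mathcal{C}$: such a singleton is nonempty and nothing lies strictly between $\emptyset$ and $\{\rho\}$ in $\mathcal{L}$, and conversely any nonempty $L\in\mathcal{L}$ contains some $\rho$, hence contains $\{\rho\}$, which is again in $\mathcal{L}$ (it equals $\langle\rho\rangle_{\mathbb{R}}\cap\mathcal{C}$). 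Then every $L$ is the join of the atoms below it, because its good representative $S=\langle S\cap\mathcal{C}\rangle$ is spanned by the points of $S\cap\mathcal{C}$, so $S = \sum_{\rho\in S\cap\mathcal{C}}\langle\rho\rangle_{\mathbb{R}}$ and intersecting with $\mathcal{C}$ recovers $L$.

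The modularity claim in finite dimension is where I expect the real work. The natural route is to exploit that $G(\mathcal{A})$ — the full subspace lattice of the finite-dimensional Euclidean space $\mathcal{A}$ — is modular, and to show the inclusion $\mathcal{L}\hookrightarrow G(\mathcal{A})$ sending $L$ to its good representative preserves $\wedge$ and $\vee$; if it does, modularity of $\mathcal{L}$ is inherited. Preservation of $\wedge$ requires the identity $\langle\,\langle S\cap\mathcal{C}\rangle \cap \langle T\cap\mathcal{C}\rangle \cap \mathcal{C}\,\rangle = \langle S\cap\mathcal{C}\rangle \cap \langle T\cap\mathcal{C}\rangle$, i.e. that an intersection of two good representatives is again a span of its $\mathcal{C}$-points; preservation of $\vee$ requires $\langle (\langle S\cap\mathcal{C}\rangle + \langle T\cap\mathcal{C}\rangle)\cap\mathcal{C}\rangle = \langle S\cap\mathcal{C}\rangle + \langle T\cap\mathcal{C}\rangle$. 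Both hinge on the geometric fact that $\mathcal{C}$ is a full-dimensional convex body inside the affine hyperplane $\{\operatorname{tr} = 1\}$, so that for any subspace $U$ meeting that hyperplane, $U\cap\mathcal{C}$ has nonempty relative interior in $U\cap\{\operatorname{tr}=1\}$ and hence spans the smallest subspace containing $U\cap\{\operatorname{tr}=1\}$; one must argue that $\langle S\cap\mathcal{C}\rangle$ is precisely the span of $S\cap\{\operatorname{tr}=1\}$ when that slice meets the positive cone, and combine this with the fact that the good representatives are exactly the subspaces of the form $\langle(\text{affine slice})\cap(\text{cone})\rangle$. Handling the degenerate cases — when one of the slices misses $\mathcal{C}$ entirely, so the corresponding $\mathcal{L}$-element is $\emptyset$ — needs a little care but is harmless. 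Once the embedding is shown to be a lattice homomorphism onto its image, modularity, atomicity and completeness of $\mathcal{L}$ all follow from the corresponding properties of $G(\mathcal{A})$ together with the observations above.
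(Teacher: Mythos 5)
The paper itself offers no proof of this proposition: Section 4 explicitly states that it reviews these results ``without proof'' from an earlier paper, so your proposal can only be judged against the mathematics rather than against an argument in the text. The first two thirds of your plan are sound: the meet in $\mathcal{L}$ is genuinely set intersection (so arbitrary meets exist and $\mathcal{L}$ is complete with top $\mathcal{C}$ and bottom $\emptyset$), the join $(S+T)\cap\mathcal{C}$ is the least upper bound because any good representative whose slice contains both $S\cap\mathcal{C}$ and $T\cap\mathcal{C}$ must contain $S+T$, and the atoms are exactly the singletons $\{\rho\}$, with every $L$ the join of its points since its good representative is spanned by them. Your observation that joins of good representatives are preserved ($\langle(S+T)\cap\mathcal{C}\rangle=S+T$) is also correct and needs no geometric input.

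The gap is in the modularity step. Your route requires the map $L\mapsto\langle L\rangle_{\mathbb{R}}$ to be a lattice embedding of $\mathcal{L}$ into the modular lattice $G(\mathcal{A})$, and in particular requires $\langle S\cap T\cap\mathcal{C}\rangle=S\cap T$ for good representatives $S,T$. This identity is false, and not only in degenerate cases. Take $\dim\mathcal{H}=2$, so that $\mathcal{C}$ is the Bloch ball sitting inside the $3$-dimensional affine hyperplane $\{\mathrm{tr}=1\}$ of the $4$-dimensional space $\mathcal{A}$. Let $\ell$ be a line tangent to the sphere at the north pole $p$, and let $P_S$ and $P_T$ be two distinct planes through $\ell$, each cutting the open ball. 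Then $S=\langle P_S\cap\mathcal{C}\rangle$ and $T=\langle P_T\cap\mathcal{C}\rangle$ are $3$-dimensional good representatives, $S\cap T$ is the $2$-dimensional linear span of $\ell$, yet $S\cap T\cap\mathcal{C}=\{p\}$, whose span is only $1$-dimensional. So the meet in $\mathcal{L}$ is not computed by intersecting good representatives, and modularity cannot be transported along this map. The same example refutes the geometric lemma you lean on --- that $U\cap\mathcal{C}$ always has nonempty relative interior in $U\cap\{\mathrm{tr}=1\}$ --- since an affine slice of a compact convex body can meet it in a single boundary point. A correct proof of modularity must verify the modular law $A\subseteq C\Rightarrow A\vee(B\wedge C)=(A\vee B)\wedge C$ directly in $\mathcal{L}$, confronting the discrepancy between $\langle B\cap C\cap\mathcal{C}\rangle$ and $\langle B\cap\mathcal{C}\rangle\cap\langle C\cap\mathcal{C}\rangle$; as written, your argument does not do this.
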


$\mathcal{L}$ is not an orthocomplemented lattice, but it is easy to
show that non-contradiction holds

\begin{equation}
L\wedge\neg L=\mathbf{0}
\end{equation}

\noindent and also contraposition

\begin{equation}
L_{1}\leq L_{2}\Longrightarrow \neg L_{2}\leq L_{1}
\end{equation}

The following proposition links atoms and states:

\begin{prop}
There is a one to one correspondence between the states of the
system and the atoms of $\mathcal{L}$.
\end{prop}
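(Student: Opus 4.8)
The plan is to exhibit the bijection explicitly: to each state $\rho\in\mathcal{C}$ associate the singleton $\{\rho\}$, check that it is an atom of $\mathcal{L}$, and then show that every atom of $\mathcal{L}$ arises in this way.

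First I would verify that $\{\rho\}\in\mathcal{L}$. Take the one-dimensional $\mathbb{R}$-subspace $\mathbb{R}\rho\in G(\mathcal{A})$ (this makes sense since $\mbox{tr}(\rho)=1$ forces $\rho\neq 0$). Because $\mbox{tr}(\lambda\rho)=\lambda$, the only scalar with $\lambda\rho\in\mathcal{C}$ is $\lambda=1$, and $\rho=1\cdot\rho\geq 0$ holds; hence $\mathbb{R}\rho\cap\mathcal{C}=\{\rho\}$, so $\{\rho\}\in\mathcal{L}$. Moreover $<\{\rho\}>_{\mathbb{R}}=\mathbb{R}\rho$ and $\mathbb{R}\rho\cap\mathcal{C}=\{\rho\}$, so $\mathbb{R}\rho$ is the good representative of $\{\rho\}$.

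Next I would pin down the bottom element of $\mathcal{L}$: the zero subspace $\{0\}\in G(\mathcal{A})$ gives $\{0\}\cap\mathcal{C}=\emptyset$ (again because $0\notin\mathcal{C}$), so $\mathbf{0}=\emptyset$ in $\mathcal{L}$. Then $\{\rho\}$ is an atom: if $L\in\mathcal{L}$ and $L\leq\{\rho\}$, i.e. $L\subseteq\{\rho\}$, then either $L=\emptyset=\mathbf{0}$ or $L=\{\rho\}$, and $\{\rho\}\neq\mathbf{0}$; so $\{\rho\}$ covers $\mathbf{0}$. The assignment $\rho\mapsto\{\rho\}$ is thus well defined into the set of atoms, and it is obviously injective.

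For surjectivity, let $A\in\mathcal{L}$ be an atom. Since $A\neq\mathbf{0}=\emptyset$ we may pick some $\rho\in A$; by the first step $\{\rho\}\in\mathcal{L}$, and $\{\rho\}\subseteq A$ gives $\{\rho\}\leq A$. As $\{\rho\}\neq\mathbf{0}$ and $A$ is an atom, $A=\{\rho\}$. Hence every atom of $\mathcal{L}$ is the singleton of a state, and $\rho\mapsto\{\rho\}$ is the desired bijection. The only step requiring any care is the first one --- showing that the affine line through a density operator meets $\mathcal{C}$ exactly in that operator --- which is precisely where the trace-one normalization of states is used; everything else is an unwinding of the definitions of $\leq$ and of the bottom element of $\mathcal{L}$. (In the infinite-dimensional case one would have to argue more carefully, consistently with the earlier remark that good representatives need not exist there.)
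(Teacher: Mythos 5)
Your proof is correct: the paper itself states this proposition without proof (it is quoted from \cite{extendedql}), and your argument --- showing $\mathbb{R}\rho\cap\mathcal{C}=\{\rho\}$ via the trace-one normalization, identifying $\mathbf{0}=\emptyset$, and then checking atomicity, injectivity and surjectivity by unwinding the set-inclusion order --- is exactly the natural route and is sound in the finite-dimensional setting the paper works in. No gaps.
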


It is well known \cite{bengtssonyczkowski2006} that there is a
lattice isomorphism between the complemented and complete lattice of
faces of the convex set $\mathcal{C}$ and
$\mathcal{L}_{v\mathcal{N}}$. Due to the following proposition

\begin{prop}
Every face of $\mathcal{C}$ is an element of $\mathcal{L}$.
\end{prop}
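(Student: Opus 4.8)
The strategy is to exhibit, for an arbitrary face $F$ of $\mathcal{C}$, an $\mathbb{R}$-subspace $S\subseteq\mathcal{A}$ with $F=S\cap\mathcal{C}$; since $S\in G(\mathcal{A})$ this gives $F\in\mathcal{L}$ at once. The only candidate is the real span $S:=\langle F\rangle_{\mathbb{R}}$, and the inclusion $F\subseteq S\cap\mathcal{C}$ is trivial (the empty face is $\{0\}\cap\mathcal{C}=\mathbf{0}\in\mathcal{L}$, so assume $F\neq\emptyset$). The whole content lies in proving $S\cap\mathcal{C}\subseteq F$, and the key elementary tool is a support-domination lemma: if $\rho,\tau\in\mathcal{C}$, $\tau\in F$ and $\mbox{ran}(\rho)\subseteq\mbox{ran}(\tau)$, then $\rho\in F$. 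Indeed, restricting to $V:=\mbox{ran}(\tau)$, where $\tau$ is positive definite with least eigenvalue $\delta>0$, and using $\rho\leq\mathbf{1}$, one gets $\tau-\epsilon\rho\geq 0$ for a suitable $\epsilon\in(0,1)$; then $\tau=\epsilon\rho+(1-\epsilon)\frac{\tau-\epsilon\rho}{1-\epsilon}$ is a proper convex combination of elements of $\mathcal{C}$, so membership of $\tau$ in the face $F$ forces $\rho\in F$.

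Next I would use $\dim(\mathcal{H})<\infty$ to produce a single maximal element of $F$. Pick finitely many $\sigma_1,\dots,\sigma_k\in F$ whose ranges span $V:=\sum_{\sigma\in F}\mbox{ran}(\sigma)$; then $\bar{\sigma}:=\frac1k\sum_i\sigma_i\in F$ by convexity and $\mbox{ran}(\bar{\sigma})=V$. Applying the support-domination lemma with $\tau=\bar{\sigma}$, together with the definition of $V$, yields the familiar classification of faces of $\mathcal{C}$ (see \cite{bengtssonyczkowski2006}): $F=F_V:=\{\rho\in\mathcal{C}\mid\mbox{ran}(\rho)\subseteq V\}$, where $\rho\in F_V$ precisely when $\rho=P\rho P$ for $P$ the orthogonal projection onto $V$.

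It then remains to identify $S=\langle F_V\rangle_{\mathbb{R}}$. Set $\mathcal{A}_P:=\{A\in\mathcal{A}\mid A=PAP\}$, an $\mathbb{R}$-subspace of $\mathcal{A}$. Clearly $\langle F_V\rangle_{\mathbb{R}}\subseteq\mathcal{A}_P$; conversely any $A\in\mathcal{A}_P$ is a real combination of rank-one projections $|e_j\rangle\langle e_j|$ with $e_j\in V$ (spectral decomposition of $A$ viewed as a Hermitian operator on $V$), and each such projection lies in $F_V$, so $\mathcal{A}_P\subseteq\langle F_V\rangle_{\mathbb{R}}$. Hence $S=\mathcal{A}_P$, and $S\cap\mathcal{C}=\{A\in\mathcal{A}\mid A=PAP,\ A\geq 0,\ \mbox{tr}(A)=1\}$; but $A=PAP$ with $A\geq 0$ is equivalent to $\mbox{ran}(A)\subseteq V$, so $S\cap\mathcal{C}=F_V=F$, which completes the argument.

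I expect the only genuinely delicate point to be the support-domination step — specifically checking that $\mbox{ran}(\rho)\subseteq\mbox{ran}(\tau)$ really does force $\tau-\epsilon\rho\geq 0$ for small $\epsilon>0$ (one restricts both operators to $\mbox{ran}(\tau)$ and uses strict positivity of $\tau$ there plus boundedness of $\rho$). Everything else is routine linear algebra, and finite-dimensionality is used exactly where the paper already uses it, namely in selecting the finite spanning family $\sigma_1,\dots,\sigma_k$ and in the spectral decomposition.
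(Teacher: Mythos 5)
Your proof is correct. Note that the paper itself gives no proof of this proposition — Section 4 explicitly reviews results of \cite{extendedql} ``without proof'' — so there is no in-text argument to compare against; your route (the support-domination lemma, the resulting classification $F=F_V=\{\rho\in\mathcal{C}\mid \mathrm{ran}(\rho)\subseteq V\}$, and the identification $\langle F_V\rangle_{\mathbb{R}}=\{A\in\mathcal{A}\mid A=PAP\}$, whence $F=\langle F_V\rangle_{\mathbb{R}}\cap\mathcal{C}\in\mathcal{L}$) is the standard one via the face structure of $\mathcal{C}$ as in \cite{bengtssonyczkowski2006}, and all the delicate steps you flag check out in finite dimension.
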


\noindent we conclude that

\begin{coro}
The complete lattice of faces of the convex set $\mathcal{C}$ is a
subposet of $\mathcal{L}$.
\end{coro}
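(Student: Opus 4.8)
The plan is to deduce the corollary directly from the preceding Proposition (``Every face of $\mathcal{C}$ is an element of $\mathcal{L}$'') together with the observation that both partial orders in play are set-theoretic inclusion. Recall that the face lattice of a convex set $\mathcal{C}$ is, by definition, the collection of all faces of $\mathcal{C}$ ordered by inclusion, and that this poset is a complete lattice: an arbitrary intersection of faces is again a face, which supplies all meets, and joins are then recovered in the usual way. On the other side, the order relation $\longrightarrow$ on $\mathcal{L}$ was defined by $(S\cap\mathcal{C})\longrightarrow(T\cap\mathcal{C})\Longleftrightarrow(S\cap\mathcal{C})\subseteq(T\cap\mathcal{C})$, i.e. it is literally set inclusion between the underlying subsets of $\mathcal{C}$.

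First I would invoke the Proposition to get a well-defined map $\iota$ sending each face $F$ to itself, now viewed as a member of $\mathcal{L}$; this map is injective because it is the identity on underlying sets. Next I would check that $\iota$ is an order embedding: if $F_{1}\subseteq F_{2}$ are faces, then $\iota(F_{1})=F_{1}\subseteq F_{2}=\iota(F_{2})$, so $\iota(F_{1})\longrightarrow\iota(F_{2})$; conversely, if $\iota(F_{1})\longrightarrow\iota(F_{2})$ then by the very definition of $\longrightarrow$ we have $F_{1}\subseteq F_{2}$. Hence $\iota$ identifies the face lattice with a subposet of $\mathcal{L}$, which is precisely the assertion.

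There is essentially no obstacle once the Proposition is granted; the only point requiring a moment's care is to be explicit that the corollary claims a \emph{subposet}, not a sublattice. Indeed, the meet of two faces in the face lattice is their intersection, whereas the meet of the corresponding elements in $\mathcal{L}$ is $<F_{1}>\cap<F_{2}>\cap\mathcal{C}$, and these need not coincide; the same caveat applies to joins. I would therefore add a sentence stressing that the statement and its proof concern only the order structure, so that this discrepancy is irrelevant here. If one wished to prove the stronger claim that the face lattice embeds in $\mathcal{L}$ as a sublattice, one would instead go through the known isomorphism between the face lattice and $\mathcal{L}_{v\mathcal{N}}$ and compare the operations induced from $\mathcal{L}_{v\mathcal{N}}$ with the operations $\vee$, $\wedge$ of $\mathcal{L}$; but that goes beyond what the corollary asserts.
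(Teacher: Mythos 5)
Your argument is correct and is essentially the one the paper intends: the preceding Proposition puts every face inside $\mathcal{L}$, and since both the face lattice and $\mathcal{L}$ are ordered by set-theoretic inclusion, the identity map is an order embedding, which is all that ``subposet'' requires. One small caveat about your closing aside: the paper actually asserts that the $\wedge$-operation \emph{is} preserved for faces (i.e.\ $\langle F_1\rangle\cap\langle F_2\rangle\cap\mathcal{C}=F_1\cap F_2$, since $\langle F\rangle\cap\mathcal{C}=F$ for a face), and only $\vee$ and $\neg$ genuinely differ, but this does not affect your proof of the corollary, which concerns only the order structure.
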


The previous Corollary shows that $\mathcal{L}$ and
$\mathcal{L}_{v\mathcal{N}}$ are connected. What is the relationship
between their operations? If $F_{1}$ and $F_{2}$ are faces we have:

\begin{enumerate}

\item[($\wedge$)]
$F_1,F_2\in\mathcal{L}_{v\mathcal{N}}$, then $F_1\wedge F_2$ in
$\mathcal{L}_{v\mathcal{N}}$ is the same as in $\mathcal{L}$. So the
inclusion $\mathcal{L}_{v\mathcal{N}}\subseteq\mathcal{L}$ preserves
the $\wedge$-operation.

\item[($\vee$)]
$F_1\vee_{\mathcal{L}} F_2\leq F_1\vee_{\mathcal{L}_{v\mathcal{N}}}
F_2$ and $F_1\leq F_2\Rightarrow F_1\vee_{\mathcal{L}}
F_2=F_1\vee_{\mathcal{L}_{v\mathcal{N}}} F_2=F_2$

\item[$(\neg$)]
$\neg_{\mathcal{L}}F\leq \neg_{\mathcal{L}_{v\mathcal{N}}}F$
\end{enumerate}

Given two systems with Hilbert spaces $\mathcal{H}_{1}$ and
$\mathcal{H}_{2}$, we can construct the lattices $\mathcal{L}_{1}$
and $\mathcal{L}_{2}$. We can also construct $\mathcal{L}$, the
lattice associated to the product space
$\mathcal{H}_{1}\otimes\mathcal{H}_{2}$. We define:

\begin{equation}
\Psi:\mathcal{L}_{1}\times\mathcal{L}_{2}\longrightarrow\mathcal{L}\quad|\quad(S_{1}\cap\mathcal{C}_{1},S_{2}\cap\mathcal{C}_{2})\longrightarrow
S\cap\mathcal{C}
\end{equation}

\noindent where
$S=(<S_{1}\cap\mathcal{C}_{1}>\otimes<S_{2}\cap\mathcal{C}_{2}>)$.
In terms of good representatives, $\Psi([S_1],[S_2])=[S_1\otimes
S_2]$. An equivalent way to define it (in the finite dimensional
case) is by saying that $\Psi$ is the induced morphism in the
quotient lattices of the tensor map

\begin{equation}
G(\mathcal{A}_1)\times G(\mathcal{A}_2)\rightarrow
G(\mathcal{A}_1\otimes_{\mathbb{R}} \mathcal{A}_2)\cong
G(\mathcal{A})
\end{equation}

\noindent We can prove the following:

\begin{prop}\label{bimorfism}
Fixing $[U]\in\mathcal{L}_2$ then $\mathcal{L}_{1}$ is isomorphic
(as complete lattice) to $\mathcal{L}_{1}\times[U] \subseteq
\mathcal{L}$. The same is true for $\mathcal{L}_{2}$ and an
arbitrary element of $\mathcal{L}_1$.
\end{prop}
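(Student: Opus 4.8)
The plan is to exhibit an explicit isomorphism of complete lattices between $\mathcal{L}_1$ and the sublattice $\mathcal{L}_1 \times [U] \subseteq \mathcal{L}$, and to check it respects arbitrary meets and joins. Fix a good representative $U \subseteq \mathcal{A}_2$ with $U = \langle U \cap \mathcal{C}_2\rangle_{\mathbb{R}}$. Define $\varphi: \mathcal{L}_1 \to \mathcal{L}$ by $\varphi([S_1]) = \Psi([S_1],[U]) = [S_1 \otimes U]$, where on the left we use the good representative $S_1 = \langle S_1 \cap \mathcal{C}_1\rangle_{\mathbb{R}}$. First I would verify that the image of $\varphi$ is exactly $\mathcal{L}_1 \times [U]$ (this is essentially the definition of the latter set) and that $\varphi$ is a poset embedding: monotonicity is clear since $S_1 \subseteq S_1'$ implies $S_1 \otimes U \subseteq S_1' \otimes U$; for the reverse implication I would use that tensoring with the fixed nonzero subspace $U$ reflects inclusions of $\mathbb{R}$-subspaces, i.e. $S_1 \otimes U \subseteq S_1' \otimes U \Rightarrow S_1 \subseteq S_1'$, which then descends to the quotient lattice $\mathcal{L}_1$ because good representatives are uniquely determined.

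Next I would check that $\varphi$ preserves the lattice operations. For meets: $(S_1 \cap S_1') \otimes U = (S_1 \otimes U) \cap (S_1' \otimes U)$ as $\mathbb{R}$-subspaces of $\mathcal{A}_1 \otimes_{\mathbb{R}} U \subseteq \mathcal{A}$ — a standard fact about tensor products of subspaces — and then intersecting with $\mathcal{C}$ and passing to good representatives gives $\varphi([S_1] \wedge [S_1']) = \varphi([S_1]) \wedge \varphi([S_1'])$. For joins: $(S_1 + S_1') \otimes U = (S_1 \otimes U) + (S_1' \otimes U)$, and again intersecting with $\mathcal{C}$ and taking the span-closure that defines $\vee$ in $\mathcal{L}$ yields compatibility. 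To get the statement for \emph{complete} lattices, I would observe that both identities above extend to arbitrary families $\{S_i\}$: $(\bigcap_i S_i) \otimes U = \bigcap_i (S_i \otimes U)$ and $\langle \sum_i S_i\rangle \otimes U = \langle \sum_i (S_i \otimes U)\rangle$, the finite-dimensionality ensuring no closure subtleties, so $\varphi$ commutes with arbitrary $\bigwedge$ and $\bigvee$. Since $\varphi$ is a monotone bijection onto $\mathcal{L}_1 \times [U]$ whose inverse is also monotone (by the reflecting property above), it is an isomorphism of complete lattices. The symmetric statement for $\mathcal{L}_2$ and a fixed element of $\mathcal{L}_1$ follows by exchanging the roles of the two factors.

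The main obstacle I anticipate is the interplay between the tensor-product operations on subspaces of $\mathcal{A}_1, \mathcal{A}_2$ and the passage to good representatives inside $\mathcal{C}$: one must confirm that $\langle (S_1 \cap \mathcal{C}_1) \otimes (U \cap \mathcal{C}_2)\rangle_{\mathbb{R}} = S_1 \otimes U$ is again a good representative, i.e. that $\langle (S_1 \otimes U) \cap \mathcal{C}\rangle_{\mathbb{R}} = S_1 \otimes U$. This is where the hypothesis that $U$ is a good representative and that $\dim \mathcal{H} < \infty$ is really used, via the fact that $\mathcal{C}$ spans $\mathcal{A}$ and that $(S_1 \cap \mathcal{C}_1) \otimes (U \cap \mathcal{C}_2)$ consists of (unnormalized) positive operators whose span recovers $S_1 \otimes U$. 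Once this point is settled, everything else is the routine bookkeeping sketched above, and the definition of $\Psi$ together with the quotient-lattice description given before the proposition does most of the work.
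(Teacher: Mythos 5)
The paper itself offers no proof of this proposition --- Section 4 states explicitly that these results are reviewed ``without proof'' from \cite{extendedql} --- so your argument can only be judged on its own terms, and on those terms it is essentially sound. The crux is exactly the point you single out: that $S_1\otimes U$ is again a good representative. This holds because every $\rho_1\otimes\rho_2$ with $\rho_1\in S_1\cap\mathcal{C}_1$ and $\rho_2\in U\cap\mathcal{C}_2$ is itself a trace-one positive operator lying in $(S_1\otimes U)\cap\mathcal{C}$, so $\langle (S_1\otimes U)\cap\mathcal{C}\rangle_{\mathbb{R}}$ already contains $\langle S_1\cap\mathcal{C}_1\rangle_{\mathbb{R}}\otimes\langle U\cap\mathcal{C}_2\rangle_{\mathbb{R}}=S_1\otimes U$, and the reverse inclusion is automatic. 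With that in hand, your order-reflection argument yields a monotone bijection onto $\mathcal{L}_1\times[U]$ with monotone inverse, and an order isomorphism automatically transports arbitrary meets and joins; the ``complete lattice'' clause therefore follows from the order embedding alone, with no further computation. Two caveats. First, one must assume $[U]\neq\mathbf{0}$ (otherwise $S_1\otimes U=0$ for every $S_1$ and $\varphi$ is constant); the proposition tacitly assumes this. Second, your claim that $\varphi$ intertwines the \emph{ambient} meet of $\mathcal{L}$ is both unnecessary and not fully justified: the subspace identity $(S_1\cap S_1')\otimes U=(S_1\otimes U)\cap(S_1'\otimes U)$ is correct, but $S_1\cap S_1'$ need not be a good representative, and one would still have to check that $((S_1\cap S_1')\otimes U)\cap\mathcal{C}$ coincides with $(\langle (S_1\cap S_1')\cap\mathcal{C}_1\rangle_{\mathbb{R}}\otimes U)\cap\mathcal{C}$, which is not obvious. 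Since the statement only asserts an isomorphism with the subposet $\mathcal{L}_1\times[U]$, not that this subposet is a sublattice of $\mathcal{L}$ closed under the ambient operations, this part of your sketch can simply be dropped.
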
\label{bi-mor}

Given $L_{1}\in\mathcal{L}_{1}$ and $L_{2}\in\mathcal{L}_{2}$, we
can define the following convex tensor product:

\begin{definition}\label{d:convex tensor product}
$L_{1}\widetilde{\otimes}\,L_{2}:=\{\sum\lambda_{ij}\rho_{i}^{1}\otimes\rho_{j}^{2}\,|\,\rho_{i}^{1}
\in L_{1},\,\,\rho_{j}^{2}\in L_{2},\,\, \sum\lambda_{ij}=1
\,\,\mbox{and} \,\,\lambda_{ij}\geq 0\}$
\end{definition}

\noindent This product is formed by all possible convex combinations
of tensor products of elements of $L_{1}$ and elements of $L_{2}$,
and it is again a convex set. Let us compute
$\mathcal{C}_{1}\widetilde{\otimes}\,\mathcal{C}_{2}$. Remember that
$\mathcal{C}_{1}=[\mathcal{A}_{1}]\in\mathcal{L}_{1}$ and
$\mathcal{C}_{2}=[\mathcal{A}_{2}]\in\mathcal{L}_{2}$:

\begin{equation}
\mathcal{C}_{1}\widetilde{\otimes}\,\mathcal{C}_{2}=
\{\sum\lambda_{ij}\rho_{i}^{1}\otimes\rho_{j}^{2}\,|\,\rho_{i}^{1}
\in \mathcal{C}_{1},\,\,\rho_{j}^{2}\in \mathcal{C}_{2},\,\,
\sum\lambda_{ij}=1 \,\,\mbox{and} \,\,\lambda_{ij}\geq 0\}
\end{equation}

\noindent So, if $\mathcal{S(\mathcal{H})}$ is the set of all
separable states, we have by definition:

\begin{equation}\label{e:separablestates}
\mathcal{S}(\mathcal{H})=\mathcal{C}_{1}\widetilde{\otimes}\,\mathcal{C}_{2}
\end{equation}

If the whole system is in a state $\rho$, using partial traces we
can define states for the subsystems $\rho_{1}=tr_{2}(\rho)$ and a
similar definition for $\rho_{2}$. Then, we can consider the maps:

\begin{equation}
\mbox{tr}_{i}:\mathcal{C}\longrightarrow \mathcal{C}_{j} \quad|\quad
\rho\longrightarrow \mbox{tr}_{i}(\rho)
\end{equation}

\noindent from which we can construct the induced projections:

\begin{equation}
\tau_{i}:\mathcal{L}\longrightarrow \mathcal{L}_{i} \quad|\quad
S\cap\mathcal{C}\longrightarrow \mbox{tr}_{i}( <S\cap\mathcal{C}>
)\cap \mathcal{C}_i
\end{equation}

In terms of good representatives $\tau_i([S])=[\mbox{tr}_i(S)]$.
Then we can define the product map

\begin{equation}
\tau:\mathcal{L}\longrightarrow\mathcal{L}_{1}\times\mathcal{L}_{2}
\quad|\quad L\longrightarrow(\tau_{1}(L),\tau_{2}(L))
\end{equation}

\noindent The maps defined in this section are shown in Figure
\ref{f:maps}.

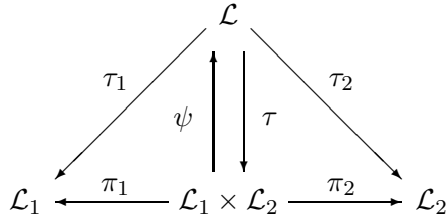
\begin{figure}\label{f:maps}
\begin{center}
\unitlength=1mm
\begin{picture}(5,5)(0,0)
\put(-3,23){\vector(-1,-1){20}} \put(3,23){\vector(1,-1){20}}
\put(-2,4){\vector(0,2){16}} \put(2,20){\vector(0,-2){16}}
\put(8,0){\vector(3,0){15}} \put(-8,0){\vector(-3,0){15}}

\put(0,25){\makebox(0,0){$\mathcal{L}$}}
\put(-27,0){\makebox(0,0){${\mathcal{L}_{1}}$}}
\put(27,0){\makebox(0,0){${\mathcal{L}_{2}}$}}
\put(0,0){\makebox(0,0){${\mathcal{L}_{1}\times\mathcal{L}_{2}}$}}
\put(-1,11){\makebox(-10,0){$\psi$}}
\put(-1,11){\makebox(13,0){$\tau$}}
\put(-15,16){\makebox(0,0){$\tau_{1}$}}
\put(15,16){\makebox(0,0){$\tau_{2}$}}
\put(-15,2){\makebox(0,0){$\pi_{1}$}}
\put(15,2){\makebox(0,0){$\pi_{2}$}}
\end{picture}
\caption{The different maps between $\mathcal{L}_{1}$,
$\mathcal{L}_{2}$, ${\mathcal{L}_{1}\times\mathcal{L}_{2}}$, and
$\mathcal{L}$. $\pi_{1}$ and $\pi_{2}$ are the canonical
projections.}\label{f:maps}
\end{center}

\end{figure}

\section{The Lattice of Convex Subsets}\label{s:Convex lattice}

The elements of $\mathcal{L}$ are formed by intersections between
closed subspaces and $\mathcal{C}$. Given that closed subspaces are
closed sets and so is $\mathcal{C}$, they are also convex subsets of
$\mathcal{C}$. We could go on further and consider all convex
subsets of $\mathcal{C}$. On the other hand (because of linearity),
partial trace operators preserve convexity and so they will map
propositions of the system into propositions of the subsystem, as
desired.

Another motivation for a further extension comes from the following
analogy. If the propositions of classical mechanics are the subsets
of $\Gamma$ (classical phase space), why cannot we consider the
\emph{convex} subsets of the \emph{convex} set of states? It seems,
after all, that convexity is an important feature of quantum
mechanics \cite{MielnikGQS}, \cite{MielnikTF}, and
\cite{MielnikGQM}. And as will be seen below, this
``convexification" of the lattice, allows for an algebraic
characterization of entanglement.

Let us begin by considering the set of all convex subsets of
$\mathcal{C}$:

\begin{definition}
$\mathcal{L}_{\mathcal{C}}:=\{C\subseteq\mathcal{C}\,|\, \mbox{C is
a convex subset of} \,\,\,\mathcal{C}\}$
\end{definition}

In order to give $\mathcal{L}_{\mathcal{C}}$ a lattice structure, we
introduce the following operations (where $conv(A)$ stands for
convex hull of a given set $A$):

\begin{definition}\label{definitionlattice}

For all $C,C_1,C_2\in\mathcal{L}_{\mathcal{C}}$

\begin{enumerate}

\item[$\wedge$]
$C_1\wedge C_2:= C_1\cap C_2$

\item[$\vee$]
$C_1\vee C_2:=conv(C_1,C_2)$. It is again a convex set, and it is
included in $\mathcal{C}$ (using convexity).

\item[$\neg$]
$\neg C:=C^{\perp}\cap\mathcal{C}$

\item[$\longrightarrow$]
$C_1\longrightarrow C_2:= C_1\subseteq C_2$

\end{enumerate}

\end{definition}

With the operations of definition \ref{definitionlattice}, it is
apparent that $(\mathcal{L}_{\mathcal{C}};\longrightarrow)$ is a
poset. If we set $\emptyset=\mathbf{0}$ and
$\mathcal{C}=\mathbf{1}$, then,
$(\mathcal{L}_{\mathcal{C}};\longrightarrow;\mathbf{0};\emptyset=\mathbf{0})$
will be a bounded poset.

\begin{prop}
$(\mathcal{L}_{\mathcal{C}};\longrightarrow;\wedge;\vee)$ satisfies

\begin{enumerate}

\item[$(a)$]
$C_1\wedge C_1= C_1$

\item[$(b)$]
$C_1\wedge C_2=C_2\wedge C_1$

\item[$(c)$]
$C_1\vee C_2=C_2\vee C_1$

\item[$(d)$]
$\neg C:=C^{\perp}\cap\mathcal{C}$

\item[$(d)$]
$C_1\wedge (C_2\wedge C_3)=(C_1\wedge C_2)\wedge C_3$

\item[$(e)$]
$C_1\vee (C_2\vee C_3)=(C_1\vee C_2)\vee C_3$

\item[$(f)$]
$C_1\wedge (C_1\vee C_2)=C_1$

\item[$(g)$]
$C_1\vee (C_1\wedge C_2)=C_1$
\end{enumerate}

\end{prop}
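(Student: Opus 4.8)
The plan is to verify each of the lattice axioms directly from the definitions of $\wedge$ and $\vee$ given in Definition~\ref{definitionlattice}, since all of them reduce to elementary properties of set intersection and convex hulls. First I would dispose of the properties involving only $\wedge$: idempotence $(a)$, commutativity $(b)$ and associativity $(d)$ are immediate because $C_1\wedge C_2=C_1\cap C_2$ and ordinary set intersection is idempotent, commutative and associative; the only thing worth a remark is that the intersection of two convex sets is convex, so $\wedge$ is a well-defined operation on $\mathcal{L}_{\mathcal{C}}$. (The statement also lists $\neg C:=C^{\perp}\cap\mathcal{C}$ twice as item $(d)$; this is evidently a typographical repetition of the definition of negation rather than an identity to be proved, so I would simply note that $\neg$ is well defined — $C^{\perp}\cap\mathcal{C}$ is an intersection of convex sets — and move on.)

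Next I would treat commutativity $(c)$ and associativity $(e)$ of $\vee$. Here $C_1\vee C_2=\mathrm{conv}(C_1\cup C_2)$, so commutativity is trivial from $C_1\cup C_2=C_2\cup C_1$. For associativity the key observation is the standard fact that for any family of sets, $\mathrm{conv}(\mathrm{conv}(A)\cup B)=\mathrm{conv}(A\cup B)$, because $\mathrm{conv}$ is a closure operator: it is monotone, extensive and idempotent, and it is ``finitely determined'' in the sense that the convex hull of a union does not change if one first takes convex hulls of the pieces. Applying this twice gives
\begin{equation}
C_1\vee(C_2\vee C_3)=\mathrm{conv}\bigl(C_1\cup\mathrm{conv}(C_2\cup C_3)\bigr)=\mathrm{conv}(C_1\cup C_2\cup C_3)=(C_1\vee C_2)\vee C_3 .
\end{equation}
One should also record, as already noted in Definition~\ref{definitionlattice}, that $C_1\vee C_2\subseteq\mathcal{C}$, using that $\mathcal{C}$ itself is convex and contains $C_1\cup C_2$, so $\mathrm{conv}(C_1\cup C_2)\subseteq\mathcal{C}$; hence $\vee$ is well defined on $\mathcal{L}_{\mathcal{C}}$.

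Finally I would prove the two absorption laws $(f)$ and $(g)$. For $(f)$: since $C_1\subseteq C_1\cup C_2\subseteq\mathrm{conv}(C_1\cup C_2)=C_1\vee C_2$, we get $C_1\wedge(C_1\vee C_2)=C_1\cap(C_1\vee C_2)=C_1$. For $(g)$: since $C_1\wedge C_2=C_1\cap C_2\subseteq C_1$, we have $C_1\cup(C_1\wedge C_2)=C_1$, which is already convex, so $C_1\vee(C_1\wedge C_2)=\mathrm{conv}(C_1)=C_1$. I do not anticipate a genuine obstacle here: every clause is a one- or two-line consequence of the closure-operator properties of $\mathrm{conv}$ and the lattice-theoretic behaviour of $\cap$ on the power set, restricted to the sub-meet-semilattice of convex sets. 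The only point requiring a little care — and the place I would state explicitly as a lemma — is the identity $\mathrm{conv}(\mathrm{conv}(A)\cup B)=\mathrm{conv}(A\cup B)$ underlying associativity of $\vee$; everything else is bookkeeping. With $(a)$--$(g)$ established together with the bounded-poset structure already observed, $(\mathcal{L}_{\mathcal{C}};\longrightarrow;\wedge;\vee)$ is a bounded lattice.
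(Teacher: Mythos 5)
Your proof is correct and follows essentially the same route as the paper's: each clause is verified directly from the definitions, the $\wedge$-clauses and absorption laws being immediate, with the only substantive point being associativity of $\vee$. Where the paper proves $(e)$ by a direct double inclusion between $\mathrm{conv}(\mathrm{conv}(C_1,C_2),C_3)$ and $\mathrm{conv}(C_1,\mathrm{conv}(C_2,C_3))$, you identify both sides with $\mathrm{conv}(C_1\cup C_2\cup C_3)$ via the closure-operator identity $\mathrm{conv}(\mathrm{conv}(A)\cup B)=\mathrm{conv}(A\cup B)$ --- the same underlying fact, packaged slightly more cleanly.
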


\begin{proof}
$C_1\wedge C_1= C_1\cap C_1=C_1$, so we have $(a)$. $(b)$, $(c)$ and
$(d)$ are equally trivial. In order to prove $e$ we have that

\begin{equation}
C_1\vee (C_2\vee C_3)=conv(C_1,conv(C_2,C_3))
\end{equation}

\noindent Given that $conv(C_2,C_3)\subseteq
conv(C_1,conv(C_2,C_3))$, then,

\begin{equation}
C_1,C_2,C_3\subseteq conv(C_1,conv(C_2,C_3))
\end{equation}

\noindent Using the above equation and convexity of
$conv(C_1,conv(C_2,C_3))$, we have that

\begin{equation}
conv(C_1,C_2)\subseteq conv(C_1,conv(C_2,C_3))
\end{equation}

\noindent and so, using convexity,

\begin{equation}
conv(conv(C_1,C_2),C_3)\subseteq conv(C_1,conv(C_2,C_3))
\end{equation}

\noindent A similar argument implies the converse inclusion, and so
we conclude that

\begin{equation}
(C_1\vee C_2)\vee
C_3=conv(conv(C_1,C_2),C_3)=conv(C_1,conv(C_2,C_3))=C_1\vee (C_2\vee
C_3)
\end{equation}

\noindent In order to prove $(f)$, we have $C_1\wedge (C_1\vee
C_2)=C_1\cap conv(C_1,C_2)$. As $C_1\cap conv(C_1,C_2)\subseteq C_1$
and $C_1\subseteq conv(C_1,C_2)$, we have $C_1=C_1\cap
conv(C_1,C_2)$, and so $(f)$ is true. Let us finally check $(g)$.
$C_1\vee (C_1\wedge C_2)=conv(C_1,C_1\cap C_2)$. This implies that
$C_1,C_1\cap C_2\subseteq conv(C_1,C_1\cap C_2)$. As $C_1$ is
convex, we have $conv(C_1,C_1\cap C_2)\subseteq C_1$, and so we have
$(g)$.
\end{proof}

Regarding the ``$\neg$" operation, if $C_1\subseteq  C_2$, then
$C_2^{\perp}\subseteq  C_1^{\perp}$. So $C_2^{\perp}\cap
\mathcal{C}\subseteq C_1^{\perp}\cap\mathcal{C}$, and hence

\begin{equation}
C_1\longrightarrow  C_2\Longrightarrow \neg C_2\longrightarrow \neg
C_1
\end{equation}

\noindent Given that $C\cap(C^{\perp}\cap\mathcal{C})=\emptyset$, we
also have:

\begin{equation}
C\wedge(\neg C)=\mathbf{0}
\end{equation}

\noindent and so, contraposition and non contradiction hold. But if
we take the proposition $C=\{\frac{1}{N}\mathrm{1}\}$, then an easy
calculation yields $\neg C=\mathbf{0}$. And then, $\neg(\neg
C)=\mathbf{1}$, and thus $\neg(\neg C)\neq C$ in general. Double
negation does not hold, thus, $\mathcal{L}_{\mathcal{C}}$ is not an
ortholattice.

$\mathcal{L}_{\mathcal{C}}$ is a lattice which includes all convex
subsets of the quantum space of states. It includes $\mathcal{L}$,
and so, all quantum states (including all improper mixtures) as
propositions. It is also in strong analogy with classical physics,
where the lattice of propositions is formed by all measurable
subsets of phase space (the space of states).

\subsection{The Relationship Between $\mathcal{L}_{v\mathcal{N}}$, $\mathcal{L}$ and
$\mathcal{L}_{\mathcal{C}}$}\label{s:The Relationship}

\begin{prop}\label{p:Inclusion of Lvn}
$\mathcal{L}_{v\mathcal{N}}\subseteq\mathcal{L}\subseteq\mathcal{L}_{\mathcal{C}}$
as posets.
\end{prop}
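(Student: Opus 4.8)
The plan is to establish the two inclusions $\mathcal{L}_{v\mathcal{N}}\subseteq\mathcal{L}$ and $\mathcal{L}\subseteq\mathcal{L}_{\mathcal{C}}$ separately, and in each case to check that the inclusion is a poset morphism, i.e.\ that it respects (and reflects) the respective order relations. Since all three orders are defined by set-theoretic inclusion (on subspaces, on sets of the form $S\cap\mathcal{C}$, and on convex subsets of $\mathcal{C}$ respectively), once the underlying set-level inclusions are in place, compatibility of the orders will be essentially automatic: $A\subseteq B$ as subsets is the same statement whether we view $A,B$ inside $\mathcal{L}_{v\mathcal{N}}$, inside $\mathcal{L}$, or inside $\mathcal{L}_{\mathcal{C}}$.

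For the second inclusion $\mathcal{L}\subseteq\mathcal{L}_{\mathcal{C}}$, I would argue that every element of $\mathcal{L}$ is of the form $S\cap\mathcal{C}$ with $S$ an $\mathbb{R}$-subspace of $\mathcal{A}$; a linear subspace is convex, $\mathcal{C}$ is convex (it is the intersection of the positive cone with an affine hyperplane, as noted after Definition \ref{d:mathcalC}), and the intersection of two convex sets is convex, so $S\cap\mathcal{C}$ is a convex subset of $\mathcal{C}$, hence an element of $\mathcal{L}_{\mathcal{C}}$. This handles the set-level inclusion, and since the order on both structures is $\subseteq$, we get a poset embedding.

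For the first inclusion $\mathcal{L}_{v\mathcal{N}}\subseteq\mathcal{L}$, the key point is to identify a closed subspace $\mathcal{V}\subseteq\mathcal{H}$ with a suitable element of $\mathcal{L}$. The natural candidate is the face $F_{\mathcal{V}}=\{\rho\in\mathcal{C}\mid \mathrm{range}(\rho)\subseteq\mathcal{V}\}$ of $\mathcal{C}$ associated to $\mathcal{V}$; the lattice isomorphism between $\mathcal{L}_{v\mathcal{N}}$ and the lattice of faces of $\mathcal{C}$ is cited earlier in the excerpt (just before the proposition ``Every face of $\mathcal{C}$ is an element of $\mathcal{L}$''), and that proposition, together with its Corollary, already tells us that faces of $\mathcal{C}$ belong to $\mathcal{L}$ and that the face lattice is a subposet of $\mathcal{L}$. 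Composing the face-lattice isomorphism $\mathcal{L}_{v\mathcal{N}}\cong\mathrm{Faces}(\mathcal{C})$ with that subposet inclusion gives $\mathcal{L}_{v\mathcal{N}}\subseteq\mathcal{L}$ as posets. Concretely one can also exhibit the good representative: $F_{\mathcal{V}}=\langle F_{\mathcal{V}}\rangle_{\mathbb{R}}\cap\mathcal{C}$, where $\langle F_{\mathcal{V}}\rangle_{\mathbb{R}}$ is the space of Hermitian operators supported on $\mathcal{V}$, which is the subspace realizing $F_{\mathcal{V}}$ as an element of $\mathcal{L}$.

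The main obstacle — really the only subtle point — is making sure the identification $\mathcal{L}_{v\mathcal{N}}\hookrightarrow\mathcal{L}$ is order-preserving \emph{and} order-reflecting, i.e.\ that $\mathcal{V}_1\subseteq\mathcal{V}_2$ in $\mathcal{H}$ holds if and only if $F_{\mathcal{V}_1}\subseteq F_{\mathcal{V}_2}$ in $\mathcal{C}$; this is exactly the content of the cited face-lattice isomorphism, so I would lean on that result rather than reprove it. One should also note in passing (as the preceding discussion in the excerpt does) that the embedding $\mathcal{L}_{v\mathcal{N}}\hookrightarrow\mathcal{L}$ is only a poset embedding, not a lattice embedding, since $\vee$ and $\neg$ differ between the two structures; but the proposition as stated only claims an inclusion ``as posets,'' so establishing that the carrier sets are nested and the orders restrict correctly is all that is required, and the argument above does that for both steps.
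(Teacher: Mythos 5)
Your proposal is correct and follows essentially the same route as the paper: the inclusion $\mathcal{L}_{v\mathcal{N}}\subseteq\mathcal{L}$ is obtained by identifying closed subspaces with faces of $\mathcal{C}$ and invoking the earlier result that faces belong to $\mathcal{L}$, the inclusion $\mathcal{L}\subseteq\mathcal{L}_{\mathcal{C}}$ follows because $S\cap\mathcal{C}$ is an intersection of convex sets, and order-compatibility is automatic since all three orders are set-theoretic inclusion. Your version is in fact slightly more explicit than the paper's (you exhibit the face $F_{\mathcal{V}}$ and its good representative), but the substance is the same.
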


\begin{proof}
We have already seen that
$\mathcal{L}_{v\mathcal{N}}\subseteq\mathcal{L}$ as sets. Moreover
it is easy to see that if $F_1\leq F_2$ in
$\mathcal{L}_{v\mathcal{N}}$ then $F_1\leq F_2$ in $\mathcal{L}$.
This is so because both orders are set theoretical inclusions.
Similarly, if $L_{1},L_{2}\in\mathcal{L}$, because intersection of
convex sets yields a convex set (and closed subspaces are convex
sets also), $L_{1},L_{2}\in\mathcal{L}_{\mathcal{C}}$, then we
obtain set theoretical inclusion. And, again, because of both orders
are set theoretical inclusions, we obtain that they are included as
posets.
\end{proof}

Regarding the $\vee$ operation, let us compare
$\vee_{\mathcal{L}_{v\mathcal{N}}}$, $\vee_{\mathcal{L}}$ and
$\vee_{\mathcal{L}_{C}}$. If $L_{1},L_{2}\in\mathcal{L}$, then they
are convex sets and so, $L_{1},L_{2}\in\mathcal{L}_{\mathcal{C}}$.
Then we can compute

\begin{equation}
L_{1}\vee_{\mathcal{L}_{C}}L_{2}=conv(L_{1},L_{2})
\end{equation}

\noindent On the other hand (if $S_{1}$ and $S_{2}$ are good
representatives for $L_{1}$ and $L_{2}$), then:

\begin{equation}
L_{1}\vee_{\mathcal{L}}L_{2}=(<S_{1}\cap\mathcal{C}>+<S_{2}\cap\mathcal{C}>)\cap\mathcal{C}
\end{equation}

\noindent The direct sum of the subspaces $<S_{1}\cap\mathcal{C}>$
and $<S_{2}\cap\mathcal{C}>$ contains as a particular case all
convex combinations of elements of $L_{1}$ and $L_{2}$. So we can
conclude

\begin{equation}
L_{1}\vee_{\mathcal{L}_{C}}L_{2}\leq L_{1}\vee_{\mathcal{L}}L_{2}
\end{equation}

As faces of $\mathcal{C}$ can be considered as elements of
$\mathcal{L}_{C}$ because they are convex, if $F_{1}$ and $F_{2}$
are faces, we can also state

\begin{equation}
F_{1}\vee_{\mathcal{L}_{C}}F_{2}\leq
F_{1}\vee_{\mathcal{L}}F_{2}\leq
F_{1}\vee_{\mathcal{L}_{v\mathcal{N}}}F_{2}
\end{equation}

Intersection of convex sets is the same as intersection of elements
of $\mathcal{L}$ and so we have

\begin{equation}
L_{1}\wedge_{\mathcal{L}_{C}}L_{2}= L_{1}\wedge_{\mathcal{L}}L_{2}
\end{equation}

\noindent and similarly

\begin{equation}
F_{1}\wedge{\mathcal{L}_{v\mathcal{N}}}F_{2}=
F_{1}\wedge_{\mathcal{L}_{C}}F_{2}= F_{1}\wedge_{\mathcal{L}}F_{2}
\end{equation}

What is the relationship between $\neg_{\mathcal{L}_{\mathcal{C}}}$
and $\neg_{\mathcal{L}}$? Suppose that $L_{1}\in\mathcal{L}$, then
they are convex sets also, and so
$L_{1}\in\mathcal{L}_{\mathcal{C}}$. Then we can compute
$\neg_{\mathcal{L}_{\mathcal{C}}}L_{1}$. We obtain:

\begin{equation}
\neg_{\mathcal{L}_{\mathcal{C}}}L_{1}=L_{1}^{\perp}\cap\mathcal{C}
\end{equation}

\noindent On the other hand, if $L_{1}=S\cap\mathcal{C}$, with $S$ a
good representative

\begin{equation}
\neg_{\mathcal{L}}L_{1}=<S\cap\mathcal{C}>^{\perp}\cap\mathcal{C}
\end{equation}

\noindent As $L_{1}\subseteq <S\cap\mathcal{C}>$, then
$<S\cap\mathcal{C}>^{\perp}\subseteq L_{1}^{\perp}$, and so

\begin{equation}
\neg_{\mathcal{L}}L_{1}\leq\neg_{\mathcal{L}_{\mathcal{C}}}L_{1}
\end{equation}

\subsection{Interactions in $QM$ and $CM$ Compared}\label{s:interaction in QM}

The origin of the extension of $\mathcal{L}_{v\mathcal{N}}$ becomes
clear if wee make a comparison between classical and quantum
compound systems. For a single classical system its properties are
faithfully represented by the subsets of its phase space. When
another classical system is added and the compound system is
considered, no enrichment of the state space of the former system is
needed in order to describe its properties, \emph{even} in the
presence of interactions. No matter which the interactions may be,
the cartesian product of phase spaces is sufficient for the
description of the compound system.

The situation is quite different in quantum mechanics. This is so
because, if we add a new quantum system to a previously isolated
one, pure states are no longer faithful in order to describe
subsystems. Interactions produce non trivial correlations, which are
reflected in the presence of entangled states (and violation of Bell
inequalities). Thus, we have to consider the information about the
non trivial correlations that each subsystem has with other
subsystems -for example, a system with the environment. The
existence of this additional information implies that we must add
new elements to the propositional structure of the system.

\section{The Relationship Between $\mathcal{L}_{\mathcal{C}}$ and The Tensor
Product of Hilbert Spaces}\label{s:The Relationship for convex}

In this section we study the relationship between the lattice
$\mathcal{L}_{\mathcal{C}}$ of a system $S$ composed of subsystems
$S_{1}$ and $S_{2}$, and the lattices of its subsystems,
$\mathcal{L}_{\mathcal{C}1}$ and $\mathcal{L}_{\mathcal{C}2}$
respectively. As in \cite{extendedql}, we do this by making the
physical interpretation of maps which can be defined between them.

\subsection{Separable States (Going Up)}\label{s:going up}

Let us define:

\begin{definition}
Given $C_{1}\subseteq\mathcal{C}_{1}$ and
$C_{2}\subseteq\mathcal{C}_{2}$

\begin{equation}
C_1\otimes C_2:=\{\rho_{1}\otimes\rho_{2}\,|\,\rho_{1}\in
C_1,\rho_{2}\in C_2\}
\end{equation}

\end{definition}

\noindent Then, we define the map:

\begin{definition}
$$\Lambda:\mathcal{L}_{\mathcal{C}1}\times\mathcal{L}_{\mathcal{C}2}\longrightarrow\mathcal{L}_{\mathcal{C}}$$
$$(C_{1},C_{2})\longrightarrow conv(C_1\otimes C_2)$$
\end{definition}

\noindent In the rest of this work will use the following
proposition (see for example \cite{Convexsets}):

\begin{prop}
Let $S$ be a subset of a linear space $\mathbb{L}$. Then $x\in
conv(S)$ iff x is contained in a finite dimensional simplex $\Delta$
whose vertices belong to $S$.
\end{prop}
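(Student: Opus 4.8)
The statement is a form of Carath\'eodory's theorem, and the plan is to prove the two implications separately. The ``if'' direction is immediate: if $x$ lies in a simplex $\Delta\subseteq\mathbb{L}$ whose vertices $v_0,\dots,v_k$ belong to $S$, then by definition of a simplex $x=\sum_{i=0}^{k}\lambda_i v_i$ with $\lambda_i\ge 0$ and $\sum_i\lambda_i=1$, which exhibits $x$ as a convex combination of points of $S$, hence $x\in conv(S)$.

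For the ``only if'' direction, start from the definition of the convex hull: $x\in conv(S)$ means there are finitely many points $s_1,\dots,s_n\in S$ and scalars $\lambda_1,\dots,\lambda_n\ge 0$ with $\sum_{i}\lambda_i=1$ and $x=\sum_i\lambda_i s_i$. All of this takes place inside the finite-dimensional affine span $A$ of $\{s_1,\dots,s_n\}$, so there is no loss in assuming $\mathbb{L}$ finite dimensional for the rest of the argument (this is also why the result is unaffected by $\mathbb{L}$ being a priori infinite dimensional). The plan is to induct on $n$: if $s_1,\dots,s_n$ are affinely independent they are by definition the vertices of a simplex $\Delta\subseteq A$ containing $x$, and we are done. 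Otherwise there is a nontrivial affine dependence, i.e. scalars $\mu_1,\dots,\mu_n$, not all zero, with $\sum_i\mu_i s_i=0$ and $\sum_i\mu_i=0$.

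The reduction step is the technical core. Using the dependence, write $x=\sum_i(\lambda_i-t\mu_i)s_i$ for every scalar $t$, and note $\sum_i(\lambda_i-t\mu_i)=1$ for every $t$. Since not all $\mu_i$ vanish and they sum to zero, at least one $\mu_i$ is positive; choose $t_0:=\min\{\lambda_i/\mu_i : \mu_i>0\}\ge 0$, attained at some index $i_0$. Then $\lambda_i-t_0\mu_i\ge 0$ for all $i$ (checking separately the cases $\mu_i>0$, $\mu_i\le 0$) and $\lambda_{i_0}-t_0\mu_{i_0}=0$, so $x$ is a convex combination of the $n-1$ points $\{s_i: i\ne i_0\}\subseteq S$. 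By the induction hypothesis $x$ lies in a simplex with vertices among these points, hence among $S$, completing the induction and the proof.

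\textbf{Main obstacle.} The only delicate point is the reduction step: verifying that the choice of $t_0$ simultaneously preserves $\sum_i(\lambda_i-t_0\mu_i)=1$, keeps all coefficients nonnegative, and annihilates at least one coefficient. This is a routine but careful case analysis on the signs of the $\mu_i$; everything else (the ``if'' direction, the base case of the induction, the passage to a finite-dimensional span) is essentially immediate from the definitions.
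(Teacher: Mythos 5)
Your proof is correct: it is the standard Carath\'eodory-type exchange argument, and the delicate step you flag (choosing $t_0=\min\{\lambda_i/\mu_i:\mu_i>0\}$ so that all coefficients stay nonnegative, still sum to $1$, and at least one vanishes) is carried out properly, including the observation that some $\mu_i$ must be positive because the $\mu_i$ sum to zero without all vanishing. There is nothing in the paper to compare against: the authors state this proposition without proof and simply cite Valentine's \emph{Convex Sets} for it, so your self-contained argument, which correctly reduces to the finite-dimensional affine span of the finitely many points appearing in the convex combination and then inducts on their number until they are affinely independent (hence span a simplex), is a legitimate and complete substitute for the citation.
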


From equation \ref{e:separablestates} and definition \ref{d:convex
tensor product} it should be clear that
$\Lambda(\mathcal{C}_{1},\mathcal{C}_{2})=\mathcal{S}(\mathcal{H})$.
Definition \ref{d:convex tensor product} also implies that for all
$C_{1}\subseteq\mathcal{C}_{1}$ and $C_{2}\subseteq\mathcal{C}_{2}$:

\begin{equation}
\Lambda(C_{1},C_{2})=C_{1}\widetilde{\otimes}\,C_{2}
\end{equation}

\begin{prop}
Let $\rho=\rho_{1}\otimes\rho_{2}$, with
$\rho_{1}\in\mathcal{C}_{1}$ and $\rho_{2}\in\mathcal{C}_{2}$. Then
$\{\rho\}=\Lambda(\{\rho_{1}\},\{\rho_{2}\})$ with
$\{\rho_1\}\in\mathcal{L}_{C1}$, $\{\rho_2\}\in\mathcal{L}_{C2}$ and
$\{\rho\}\in\mathcal{C}$.
\end{prop}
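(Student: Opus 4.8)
The plan is simply to unwind the definitions: the statement is a bookkeeping check that the three objects $\{\rho_1\}$, $\{\rho_2\}$, $\{\rho\}$ lie in the appropriate lattices and that $\Lambda$ collapses to the identity on singletons, so there is no serious content to prove.

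First I would settle the membership claims. Since $\rho_1$ and $\rho_2$ are single points of $\mathcal{C}_1$ and $\mathcal{C}_2$, the one-point sets $\{\rho_1\}$ and $\{\rho_2\}$ are trivially convex, hence $\{\rho_1\}\in\mathcal{L}_{\mathcal{C}1}$ and $\{\rho_2\}\in\mathcal{L}_{\mathcal{C}2}$. For $\rho\in\mathcal{C}$ I would check the three defining conditions of Definition \ref{d:mathcalC}: $\rho=\rho_1\otimes\rho_2$ is Hermitian and positive semidefinite because the tensor product of positive Hermitian operators is again positive Hermitian, and $\mbox{tr}(\rho_1\otimes\rho_2)=\mbox{tr}(\rho_1)\,\mbox{tr}(\rho_2)=1$. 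Hence $\rho\in\mathcal{C}$, and being a singleton it is a (convex) element of $\mathcal{L}_{\mathcal{C}}$.

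Next I would compute $\Lambda(\{\rho_1\},\{\rho_2\})$ straight from its definition. By the definition of $C_1\otimes C_2$ we have $\{\rho_1\}\otimes\{\rho_2\}=\{\rho_1\otimes\rho_2\}=\{\rho\}$, a one-point set, and therefore $\Lambda(\{\rho_1\},\{\rho_2\})=conv(\{\rho\})$. The only remaining point is that $conv(\{\rho\})=\{\rho\}$; this is immediate since a singleton is already convex, and if one wants to be pedantic it also follows from the preceding proposition (the only finite-dimensional simplex with vertices in $\{\rho\}$ is the degenerate one reduced to $\rho$ itself). Combining, $\Lambda(\{\rho_1\},\{\rho_2\})=\{\rho\}$, which is the assertion. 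There is no genuine obstacle; the only things worth stating explicitly are that $\rho_1\otimes\rho_2$ really is a density operator and that the convex-hull step is the singleton case of the quoted proposition rather than an unremarked identity.
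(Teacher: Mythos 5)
Your proof is correct and follows essentially the same route as the paper's: unwind the definition of $\Lambda$ on the singletons, observe that $\{\rho_1\}\otimes\{\rho_2\}=\{\rho_1\otimes\rho_2\}$, and note that the convex hull of a singleton is the singleton itself. The extra verifications you add (that $\rho_1\otimes\rho_2$ is indeed a density operator and that the singletons are convex) are sound and only make explicit what the paper leaves implicit.
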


\begin{proof}
We already know that the atoms are elements of the lattices. Thus,

\begin{equation}
\Lambda(\{\rho_{1}\},\{\rho_{2}\})
=conv(\{\rho_{1}\otimes\rho_{2}\})=\{\rho_{1}\otimes\rho_{2}\}=\{\rho\}
\end{equation}

\end{proof}

\begin{prop}
Let $\rho\in\mathcal{S(\mathcal{H})}$, the set of separable states.
Then, there exist $C\in\mathcal{L}_{\mathcal{C}}$,
$C_{1}\in\mathcal{L}_{\mathcal{C}_{1}}$ and
$C_{2}\in\mathcal{L}_{C_{2}}$ such that $\rho\in C$ and
$L=\Lambda(C_{1},C_{2})$.
\end{prop}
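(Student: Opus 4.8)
The plan is to construct the three convex sets directly from a separable decomposition of $\rho$. First I would invoke the definition of separability (equivalently, equation \ref{e:separablestates}), sharpened by the simplex proposition, to write $\rho=\sum_{i,j}\lambda_{ij}\,\rho_{i}^{(1)}\otimes\rho_{j}^{(2)}$ as a \emph{finite} convex combination of product states, with $\rho_{i}^{(1)}\in\mathcal{C}_{1}$, $\rho_{j}^{(2)}\in\mathcal{C}_{2}$, $\lambda_{ij}\geq 0$ and $\sum_{i,j}\lambda_{ij}=1$.

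Next I would put $C_{1}:=conv\{\rho_{i}^{(1)}\}$ and $C_{2}:=conv\{\rho_{j}^{(2)}\}$. Each is the convex hull of a subset of $\mathcal{C}_{1}$, resp.\ $\mathcal{C}_{2}$, hence a convex subset of it, so $C_{1}\in\mathcal{L}_{\mathcal{C}_{1}}$ and $C_{2}\in\mathcal{L}_{\mathcal{C}_{2}}$. I would then set $C:=\Lambda(C_{1},C_{2})=conv(C_{1}\otimes C_{2})=C_{1}\widetilde{\otimes}\,C_{2}$, which is a convex subset of $\mathcal{C}$ (the convex hull of a subset of the convex set $\mathcal{C}$), so that $C\in\mathcal{L}_{\mathcal{C}}$.

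Finally I would check $\rho\in C$. Each generator $\rho_{i}^{(1)}\otimes\rho_{j}^{(2)}$ lies in $C_{1}\otimes C_{2}\subseteq C$, and because $C=C_{1}\widetilde{\otimes}\,C_{2}$ is by construction closed under convex combinations, the combination $\sum_{i,j}\lambda_{ij}\,\rho_{i}^{(1)}\otimes\rho_{j}^{(2)}=\rho$ belongs to $C$. This gives $\rho\in C$ together with $C=\Lambda(C_{1},C_{2})$, which is the assertion (reading the ``$L$'' in the statement as $C$).

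I do not expect any genuine obstacle: the only subtlety is that the separable decomposition be finite, so that $C_{1}$ and $C_{2}$ are bona fide convex subsets, and this is precisely what the simplex proposition (Carath\'eodory in finite dimension) guarantees. A completely trivial alternative already suffices to prove the bare statement, namely $C_{1}=\mathcal{C}_{1}$, $C_{2}=\mathcal{C}_{2}$, $C=\mathcal{S}(\mathcal{H})=\Lambda(\mathcal{C}_{1},\mathcal{C}_{2})$, which contains every separable state by equation \ref{e:separablestates}; I would nevertheless favour the decomposition-based construction, since it yields subsystem propositions $C_{1},C_{2}$ adapted to the given $\rho$.
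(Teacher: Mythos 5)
Your construction is exactly the one the paper uses: take a finite separable decomposition of $\rho$, set $C_{1}$ and $C_{2}$ to be the convex hulls of the marginal factors appearing in it, define $C=\Lambda(C_{1},C_{2})=conv(C_{1}\otimes C_{2})$, and conclude $\rho\in C$ by convexity; your reading of the statement's ``$L$'' as ``$C$'' is also the intended one. The proof is correct and essentially identical to the paper's, with your remarks on finiteness and the trivial alternative $C=\mathcal{S}(\mathcal{H})$ being harmless additions.
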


\begin{proof}
If $\rho\in\mathcal{S(\mathcal{H})}$, then
$\rho=\sum_{ij}\lambda_{ij}\rho_{i}^{1}\otimes\rho_{j}^{2}$, with
$\sum_{ij}\lambda_{ij}=1$ and $\lambda_{ij}\geq 0$. Consider the
convex sets:

\begin{equation}
C_{1}=conv(\{\rho_{1}^{1},\rho_{2}^{1},\cdots,\rho_{k}^{1}\})\quad
C_{2}=conv(\{\rho_{1}^{2},\rho_{2}^{2},\cdots,\rho_{l}^{2}\})
\end{equation}

\noindent Then we define:

$$C:=\Lambda(C_{1},C_{2})=conv(C_{1}\otimes C_{2})$$

\noindent Clearly, the set
$\{\rho_{i}^{1}\otimes\rho_{j}^{2}\}\subseteq C_{1}\otimes C_{2}$,
and then $\rho\in C$.
\end{proof}

\subsection{Projections Onto $\mathcal{L}_{\mathcal{C}_{1}}$ and $\mathcal{L}_{\mathcal{C}_{2}}$ (Going
Down)}\label{s:projectionsc}

Let us now study the projections onto
$\mathcal{L}_{\mathcal{C}_{1}}$ and $\mathcal{L}_{\mathcal{C}_{2}}$.
From a physical point of view, it is of interest to study the
partial trace operators. If the whole system is in a state $\rho$,
using partial traces we can define states for the subsystems
$\rho_{1}=tr_{2}(\rho)$ and a similar definition for $\rho_{2}$.
Then, we can consider the maps:

\begin{equation}
\mbox{tr}_{i}:\mathcal{C}\longrightarrow \mathcal{C}_{j} \quad|\quad
\rho\longrightarrow \mbox{tr}_{i}(\rho)
\end{equation}

\noindent from which we can construct the induced projections:

\begin{equation}
\tau_{i}:\mathcal{L}_{\mathcal{C}}\longrightarrow
\mathcal{L}_{\mathcal{C}_{i}} \quad|\quad C\longrightarrow
\mbox{tr}_{i}( C )
\end{equation}

\noindent Then we can define the product map

\begin{equation}
\tau:\mathcal{L}_{\mathcal{C}}\longrightarrow\mathcal{L}_{\mathcal{C}_{1}}\times\mathcal{L}_{\mathcal{C}_{2}}
\quad|\quad C\longrightarrow(\tau_{1}(C),\tau_{2}(C))
\end{equation}

We use the same notation for $\tau$ and $\tau_{i}$ (though they are
different functions) as in \cite{extendedql} and section \ref{s:New
language}, and this should not introduce any difficulty. We can
prove the following about the image of $\tau_{i}$.

\begin{prop}\label{lastausonsurjective}
The maps $\tau_{i}$ preserve the convex structure, i.e., they map
convex sets into convex sets.
\end{prop}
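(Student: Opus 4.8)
The plan is to show directly that $\mathrm{tr}_i$, being the restriction of a linear map, sends convex sets to convex sets, and that $\tau_i$ is exactly this set-image map. The statement to prove is Proposition \ref{lastausonsurjective}: the maps $\tau_i \colon \mathcal{L}_{\mathcal{C}} \to \mathcal{L}_{\mathcal{C}_i}$ defined by $C \mapsto \mathrm{tr}_i(C)$ preserve convexity. Since the partial trace $\mathrm{tr}_i \colon \mathcal{A} \to \mathcal{A}_j$ is an $\mathbb{R}$-linear map (it is the composition of the inclusion $\mathcal{C}\hookrightarrow\mathcal{A}$ with a linear map on the ambient space of Hermitian operators, sending $\rho$ to its partial trace), the image of any convex subset under it is again convex. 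Thus the whole content is the elementary fact that linear images of convex sets are convex, plus the observation that $\tau_i(C)$ indeed lands in $\mathcal{L}_{\mathcal{C}_i}$, i.e. that $\mathrm{tr}_i(C) \subseteq \mathcal{C}_i$ whenever $C \subseteq \mathcal{C}$.

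First I would fix $C \in \mathcal{L}_{\mathcal{C}}$, so $C$ is a convex subset of $\mathcal{C}$, and take two arbitrary elements of $\tau_i(C) = \mathrm{tr}_i(C)$, say $\sigma = \mathrm{tr}_i(\rho)$ and $\sigma' = \mathrm{tr}_i(\rho')$ with $\rho, \rho' \in C$. For $\alpha \in [0,1]$, convexity of $C$ gives $\alpha\rho + (1-\alpha)\rho' \in C$. Then linearity of the partial trace yields
\begin{equation}
\alpha\sigma + (1-\alpha)\sigma' = \alpha\,\mathrm{tr}_i(\rho) + (1-\alpha)\,\mathrm{tr}_i(\rho') = \mathrm{tr}_i\big(\alpha\rho + (1-\alpha)\rho'\big) \in \mathrm{tr}_i(C) = \tau_i(C),
\end{equation}
so $\tau_i(C)$ is convex. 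Second I would note that $\tau_i(C) \subseteq \mathcal{C}_i$: for $\rho \in C \subseteq \mathcal{C}$ the partial trace $\mathrm{tr}_i(\rho)$ is positive (partial trace preserves positivity, being a completely positive map) and has unit trace (since $\mathrm{tr}(\mathrm{tr}_i(\rho)) = \mathrm{tr}(\rho) = 1$), hence $\mathrm{tr}_i(\rho) \in \mathcal{C}_i$. Therefore $\tau_i(C) \in \mathcal{L}_{\mathcal{C}_i}$, and $\tau_i$ is well-defined as a map into $\mathcal{L}_{\mathcal{C}_i}$ that preserves convexity. The same argument applies verbatim with the roles of the two factors exchanged, and the claim for the product map $\tau$ follows since a product of convex sets is convex.

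There is essentially no obstacle here: the only subtlety worth a sentence is making sure that $\tau_i(C)$ really is a legitimate element of $\mathcal{L}_{\mathcal{C}_i}$ rather than merely a convex subset of $\mathcal{A}_j$, which is handled by the positivity and trace-preservation of the partial trace recalled in Section \ref{s:introduction to QL}. (In the infinite-dimensional setting one would also want to check closedness, but the paper works in finite dimension, where every convex subset is admissible, so this does not arise.) The proof is thus a two-line computation using linearity of $\mathrm{tr}_i$ together with the standard properties of the partial trace.
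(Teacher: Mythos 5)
Your proof is correct and is essentially the same as the paper's: both arguments take a convex combination of two points in the image, lift them to preimages in $C$, and use linearity of the partial trace together with convexity of $C$ to conclude. The only addition is your explicit check that $\mathrm{tr}_i(C)\subseteq\mathcal{C}_i$, which the paper leaves implicit.
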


\begin{proof}
Let $C\subseteq \mathcal{C}$ be a convex set. Let $C_{1}$ be the
image of $C$ under $\tau_{2}$ (a similar argument holds for
$\tau_{1}$). Let us show that $C_{1}$ is convex. Let $\rho_{1}$ and
$\rho'_{1}$ be elements of $C_{1}$. Consider
$\sigma_{1}=\alpha\rho_{1}+(1-\alpha)\rho'_{1}$, with
$0\leq\alpha\leq 1$. Then, there exists $\rho,\rho'\in\mathcal{C}$
such that:

\begin{equation}
\sigma_{1}=\alpha\mbox{tr}_2(\rho)+(1-\alpha)\mbox{tr}_2(\rho')=\mbox{tr}_{2}(\alpha\rho+(1-\alpha)\rho')
\end{equation}

\noindent where we have used the linearity of trace. Because of
convexity of $C$, $\sigma:=\alpha\rho+(1-\alpha)\rho'\in C$, and so,
$\sigma_{1}=\mbox{tr}_{2}(\sigma)\in C_{1}$.
\end{proof}

\begin{prop}\label{lastausonsurjective}
The functions $\tau_{i}$ are surjective and preserve the
$\vee$-operation. They are not injective.
\end{prop}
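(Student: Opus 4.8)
The plan is to establish the three assertions separately, using the structural facts about $\tau_i$ and partial traces already at hand.

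\medskip

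\noindent\textbf{Surjectivity.} First I would show $\tau_i$ is surjective. Given any convex set $C_1\in\mathcal{L}_{\mathcal{C}_1}$ (say for $\tau_2$, which lands in $\mathcal{C}_1$), I want a convex $C\subseteq\mathcal{C}$ with $\mathrm{tr}_2(C)=C_1$. The natural candidate is $C=\mathrm{tr}_2^{-1}(C_1)=\{\rho\in\mathcal{C}\,|\,\mathrm{tr}_2(\rho)\in C_1\}$, the full preimage. This is convex by linearity of the partial trace (exactly the argument in the preceding proposition), and $\mathrm{tr}_2$ restricted to $\mathcal{C}$ is onto $\mathcal{C}_1$ — for instance $\mathrm{tr}_2(\rho_1\otimes\tfrac1N\mathbf{1}_2)=\rho_1$ for any $\rho_1\in\mathcal{C}_1$ — so $\mathrm{tr}_2(C)=C_1$ and hence $C_1$ is in the image of $\tau_2$. (One could equally take $C=C_1\otimes\{\tfrac1N\mathbf{1}_2\}$, which is convex and maps onto $C_1$.)

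\medskip

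\noindent\textbf{Preservation of $\vee$.} Next I would check $\tau_i(C\vee C')=\tau_i(C)\vee\tau_i(C')$, i.e. $\mathrm{tr}_i(\mathrm{conv}(C,C'))=\mathrm{conv}(\mathrm{tr}_i(C),\mathrm{tr}_i(C'))$. Both inclusions follow from linearity: $\mathrm{tr}_i$ is a linear map, and a linear map sends the convex hull of $C\cup C'$ onto the convex hull of the image of $C\cup C'$, which is $\mathrm{tr}_i(C)\cup\mathrm{tr}_i(C')$. Concretely, ``$\subseteq$'': any point of $\mathrm{conv}(C,C')$ is a finite convex combination $\sum_k\lambda_k\sigma_k$ of points of $C\cup C'$ (using the simplex characterization of the convex hull quoted just above), and applying $\mathrm{tr}_i$ gives $\sum_k\lambda_k\,\mathrm{tr}_i(\sigma_k)$, a convex combination of points of $\mathrm{tr}_i(C)\cup\mathrm{tr}_i(C')$. ``$\supseteq$'': $\mathrm{tr}_i(C),\mathrm{tr}_i(C')\subseteq\mathrm{tr}_i(\mathrm{conv}(C,C'))$ since $C,C'\subseteq\mathrm{conv}(C,C')$, and the right-hand side is convex by the previous proposition, so it contains their convex hull.

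\medskip

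\noindent\textbf{Non-injectivity.} Finally, failure of injectivity is witnessed by a single explicit example: take two distinct states $\rho,\rho'\in\mathcal{C}$ with $\mathrm{tr}_2(\rho)=\mathrm{tr}_2(\rho')$ — for instance an entangled pure state $\rho=|\psi\rangle\langle\psi|$ and the product state $\rho'=\mathrm{tr}_2(\rho)\otimes\mathrm{tr}_1(\rho)$, which have the same reduced state on $S_1$ but are different points of $\mathcal{C}$ (indeed $\rho$ is pure while $\rho'$ need not be). Then $\{\rho\}\neq\{\rho'\}$ in $\mathcal{L}_{\mathcal{C}}$ but $\tau_2(\{\rho\})=\{\mathrm{tr}_2(\rho)\}=\{\mathrm{tr}_2(\rho')\}=\tau_2(\{\rho'\})$. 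A similar example works for $\tau_1$. I expect the $\vee$-preservation step to be the only one requiring any care, and even there the simplex lemma quoted immediately before the proposition does the work; surjectivity and non-injectivity are essentially exhibitions of explicit (pre)images.
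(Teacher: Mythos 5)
Your proposal is correct and follows essentially the same route as the paper: surjectivity via an explicit convex preimage of the form $C_1\otimes\{\rho_2\}$ (the paper takes an arbitrary $\rho_2\in\mathcal{C}_2$ where you take $\tfrac{1}{N}\mathbf{1}_2$, and it derives non-injectivity from the arbitrariness of that choice rather than from your entangled-state witness, but both are one-line exhibitions), and $\vee$-preservation by the same two-inclusion argument from linearity of the partial trace. The only cosmetic difference is that you invoke the simplex characterization of the convex hull where the paper writes a generic point of $conv(C,C')$ directly as a two-term convex combination; both are fine.
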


\begin{proof}
Take the convex set $C_{1}\in\mathcal{L}_{\mathcal{C}_{1}}$. Choose
an arbitrary element of $\mathcal{C}_{2}$, say $\rho_{2}$. Now
consider the following element of $\mathcal{L}_{\mathcal{C}}$

\begin{equation}
C=C_{1}\otimes\rho_{2}
\end{equation}

\noindent $C$ is convex, and so belongs to
$\mathcal{L}_{\mathcal{C}}$, because if
$\rho\otimes\rho_{2},\sigma\otimes\rho_{2}\in C$, then any convex
combination
$\alpha\rho\otimes\rho_{2}+(1-\alpha)\sigma\otimes\rho_{2}=(\alpha\rho+(1-\alpha)\sigma)\otimes\rho_{2}\in
C$ (where we have used convexity of $C_{1}$). It is clear that
$\tau_{1}(C)=C_{1}$, because if $\rho_{1}\in C_{1}$, then
$\mbox{tr}(\rho_{1}\otimes\rho_{2})=\rho_{1}$. So, $\tau_{1}$ is
surjective. On the other hand, the arbitrariness of $\rho_{2}$
implies that it is not injective. An analogous argument follows for $\tau_2$.\\
Let us see that $\tau_i$ preserves the $\vee$-operation. Let $C$ and
$C'$ be convex subsets of $\mathcal{C}$. We must compute
$\mbox{tr}_{2}(C\vee C'))=\mbox{tr}_{2}(conv(C,C'))$. We must show
that this is the same as $conv(\mbox{tr}_{2}(C),\mbox{tr}_{2}(C'))$.
Take $x\in conv(\mbox{tr}_{2}(C),\mbox{tr}_{2}(C'))$. Then
$x=\alpha\mbox{tr}_{2}(\rho)+(1-\alpha)\mbox{tr}_{2}(\rho')$, with
$\rho\in C$, $\rho'\in C'$ and $0\leq\alpha\leq 1$. Using linearity
of trace, $x=\mbox{tr}_{2}(\alpha\rho+(1-\alpha)\rho')$.
$\alpha\rho+(1-\alpha)\rho'\in conv(C,C')$, and so,
$x\in\mbox{tr}_{2}(conv(C,C'))$. Hence we have

\begin{equation}
conv(\mbox{tr}_{2}(C),\mbox{tr}_{2}(C'))\subseteq\mbox{tr}_{2}(conv(C,C'))
\end{equation}

\noindent In other to prove the other inclusion, take
$x\in\mbox{tr}_{2}(conv(C,C'))$. Then,

\begin{equation}
x=\mbox{tr}_{2}(\alpha\rho+(1-\alpha)\rho')=\alpha\mbox{tr}_{2}(\rho)+(1-\alpha)\mbox{tr}_{2}(\rho')
\end{equation}

\noindent with $\rho\in C_{1}$ and $\rho'\in C'$. On the other hand,
$\mbox{tr}_{2}(\rho)\in\mbox{tr}_{2}(C)$ and
$\mbox{tr}_{2}(\rho')\in\mbox{tr}_{2}(C')$. This proves that:

$$\mbox{tr}_{2}(conv(C,C'))\subseteq conv(\mbox{tr}_{2}(C),\mbox{tr}_{2}(C'))$$

\end{proof}

\noindent Let us now consider the $\wedge$-operation. If
$x\in\tau_i(C\wedge C')=\tau_i(C\cap C')$ then $x=\tau_i(\rho)$ with
$\rho\in C\cap C'$. But if $\rho\in C$, then
$x=\tau_i(\rho)\in\mbox{tr}_i(C)$. As $\rho\in C'$ also, a similar
argument shows that $x=\tau_i(\rho)\in\mbox{tr}_i(C')$. Then
$x\in\tau_i(C)\cap \tau_i(C')$. And so:

\begin{equation}
\tau_i(C\cap C')\subseteq\tau_i(C)\cap \tau_i(C')
\end{equation}

\noindent which is the same as:

\begin{equation}
\tau_i(C\wedge C')\leq\tau_i(C)\wedge\tau_i(C')
\end{equation}

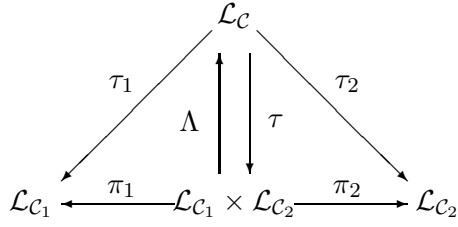
\begin{figure}\label{f:maps}
\begin{center}
\unitlength=1mm
\begin{picture}(5,5)(0,0)
\put(-3,23){\vector(-1,-1){20}} \put(3,23){\vector(1,-1){20}}
\put(-2,4){\vector(0,2){16}} \put(2,20){\vector(0,-2){16}}
\put(8,0){\vector(3,0){15}} \put(-8,0){\vector(-3,0){15}}

\put(0,25){\makebox(0,0){$\mathcal{L}_{\mathcal{C}}$}}
\put(-27,0){\makebox(0,0){${\mathcal{L}_{\mathcal{C}_{1}}}$}}
\put(27,0){\makebox(0,0){${\mathcal{L}_{\mathcal{C}_{2}}}$}}
\put(0,0){\makebox(0,0){${\mathcal{L}_{\mathcal{C}_{1}}\times\mathcal{L}_{\mathcal{C}_{2}}}$}}
\put(-1,11){\makebox(-10,0){$\Lambda$}}
\put(-1,11){\makebox(13,0){$\tau$}}
\put(-15,16){\makebox(0,0){$\tau_{1}$}}
\put(15,16){\makebox(0,0){$\tau_{2}$}}
\put(-15,2){\makebox(0,0){$\pi_{1}$}}
\put(15,2){\makebox(0,0){$\pi_{2}$}}
\end{picture}
\caption{The different maps between $\mathcal{L}_{\mathcal{C}_{1}}$,
$\mathcal{L}_{\mathcal{C}_{2}}$,
${\mathcal{L}_{\mathcal{C}_{1}}\times\mathcal{L}_{\mathcal{C}_{2}}}$,
and $\mathcal{L}_{\mathcal{C}}$}\label{f:maps}
\end{center}

\end{figure}

\noindent But these sets are not equal in general, as the following
example shows. Take $\{\rho_{1}\otimes\rho_{2}\}\in\mathcal{L}$ and
$\{\rho_{1}\otimes\rho'_{2}\}\in\mathcal{L}$, with $\rho'\neq\rho$.
It is clear that
$\{\rho_{1}\otimes\rho_{2}\}\wedge\{\rho_{1}\otimes\rho'_{2}\}=\mathbf{0}$
and so,
$\tau_{1}(\{\rho_{1}\otimes\rho_{2}\}\wedge\{\rho_{1}\otimes\rho'_{2}\})=\mathbf{0}$.
On the other hand,
$\tau_{1}(\{\rho_{1}\otimes\rho_{2}\})=\{\rho_{1}\}=\tau_{1}(\{\rho_{1}\otimes\rho'_{2}\})$,
and  so,
$\tau_{1}(\{\rho_{1}\otimes\rho_{2}\})\wedge\tau_{1}(\{\rho_{1}\otimes\rho'_{2}\})=\{\rho_{1}\}$.
A similar fact holds for the $\neg$-operation.

The last result is in strong analogy with what happens in
$\mathcal{L}$, where lack of injectivity of the $\tau_{i}$ may be
physically interpreted in the fact that the whole system has much
more information than that of its parts. It is again useful to make
a comparison with the classical case in order to illustrate what is
happening. The same as in classical mechanics, we have atoms in
$\mathcal{L}$ which are tensor products of atoms of
$\mathcal{L}_{1}$ and $\mathcal{L}_{2}$. But in contrast to
classical mechanics, entangled states originate atoms of
$\mathcal{L}$ which cannot be expressed in such a way, and thus, the
fiber of the projection $\tau_{i}$ is much bigger than that of its
classical counterpart.

It is again an important result that the projection function $\tau$
cannot be defined properly within the frame of the traditional
approaches of $QL$ because there is no place for improper mixtures
in those formalisms. But in the formalism presented here they are
included as \emph{elements} of the lattices, and so we can define
the projections from the lattice of the whole system to the lattices
of the subsystems. This enables a more natural approach when
compound systems are considered from a quantum logical point of
view.

\subsection{An Algebraic Characterization for
Entanglement}\label{s:entanglement}

We shown that it is possible to extend $\mathcal{L}_{v\mathcal{N}}$
in order to deal with statistical mixtures and that $\mathcal{L}$
and $\mathcal{L}_{\mathcal{C}}$ are possible extensions. It would be
interesting to search for a characterization of entanglement within
this framework. Let us see first what happens with the functions
$\Lambda\circ\tau$ and $\tau\circ\Lambda$. We have:

\begin{prop}\label{subir bajar}
$\tau\circ\Lambda=Id$.
\end{prop}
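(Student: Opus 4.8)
The plan is to check the identity coordinate by coordinate. Since $\tau=(\tau_{1},\tau_{2})$ and $\Lambda(C_{1},C_{2})=conv(C_{1}\otimes C_{2})$, the assertion $\tau\circ\Lambda=Id$ amounts to showing that, for every pair of nonempty convex sets $C_{1}\subseteq\mathcal{C}_{1}$ and $C_{2}\subseteq\mathcal{C}_{2}$, one has $\tau_{1}(conv(C_{1}\otimes C_{2}))=C_{1}$ and, symmetrically, $\tau_{2}(conv(C_{1}\otimes C_{2}))=C_{2}$. (The empty set must be excluded: $\Lambda(\emptyset,C_{2})=\emptyset$ while $\tau(\emptyset)=(\emptyset,\emptyset)$, so, read literally, the identity holds only on nonempty sets, and I would phrase it that way.) Two facts carry the whole argument: the characterization of the convex hull recalled above --- every point of $conv(X)$ lies in a finite-dimensional simplex with vertices in $X$ --- and the elementary behaviour of the partial trace on a product, namely that it sends $\rho_{1}\otimes\rho_{2}$ to $\rho_{1}$ (respectively $\rho_{2}$) because the traced-out factor is a density matrix of unit trace, together with linearity of the partial trace. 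Throughout I would, as in the preceding proofs, also write $\tau_{i}$ for the underlying partial-trace map acting on single states.

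For the inclusion $C_{1}\subseteq\tau_{1}(\Lambda(C_{1},C_{2}))$ I would argue directly: given $\rho_{1}\in C_{1}$, choose any $\rho_{2}\in C_{2}$ (possible since $C_{2}\neq\emptyset$); then $\rho_{1}\otimes\rho_{2}\in C_{1}\otimes C_{2}\subseteq conv(C_{1}\otimes C_{2})$ and $\tau_{1}(\rho_{1}\otimes\rho_{2})=\rho_{1}\,\mbox{tr}(\rho_{2})=\rho_{1}$, so $\rho_{1}$ is in the image. For the reverse inclusion, take $x=\tau_{1}(\sigma)$ with $\sigma\in conv(C_{1}\otimes C_{2})$. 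By the convex-hull proposition, $\sigma=\sum_{k}\lambda_{k}\,\rho_{1}^{(k)}\otimes\rho_{2}^{(k)}$ with $\rho_{1}^{(k)}\in C_{1}$, $\rho_{2}^{(k)}\in C_{2}$, $\lambda_{k}\geq 0$ and $\sum_{k}\lambda_{k}=1$. Linearity of the partial trace together with $\tau_{1}(\rho_{1}^{(k)}\otimes\rho_{2}^{(k)})=\rho_{1}^{(k)}$ then give $x=\sum_{k}\lambda_{k}\rho_{1}^{(k)}$, a convex combination of elements of $C_{1}$; since $C_{1}$ is convex, $x\in C_{1}$. Hence $\tau_{1}(\Lambda(C_{1},C_{2}))=C_{1}$, and exchanging the roles of the two subsystems yields $\tau_{2}(\Lambda(C_{1},C_{2}))=C_{2}$, so $\tau(\Lambda(C_{1},C_{2}))=(C_{1},C_{2})$ and $\tau\circ\Lambda=Id$.

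I do not expect a real obstacle here; the proof is bookkeeping with linearity of the partial trace and convexity. The two points worth a line of care are (i) invoking the finite-simplex description of $conv(C_{1}\otimes C_{2})$, so that $\sigma$ is treated as an honest finite convex combination rather than an infinite sum of operators, and (ii) the small but essential use of $\mbox{tr}(\rho_{2})=1$ (and $\mbox{tr}(\rho_{1})=1$ for the other coordinate) to discard the traced-out tensor factor. The only further thing to flag is the empty-set edge case mentioned above.
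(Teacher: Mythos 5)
Your proof is correct and follows the same route as the paper: the paper's own argument simply asserts the equalities $\mbox{tr}_2(conv(C_1\otimes C_2))=C_1$ and $\mbox{tr}_1(conv(C_1\otimes C_2))=C_2$, and you supply exactly the justification behind them (finite convex combinations via the simplex characterization, linearity of the partial trace, and $\mbox{tr}(\rho_i)=1$). Your remark about the empty-set edge case is a fair point of rigor that the paper glosses over, but it does not change the substance of the argument.
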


\begin{proof}
$$\tau_1(\Lambda(C_{1},C_{2}))=\tau_1(conv(C_{1}\otimes C_{2}))=\mbox{tr}_2(conv(C_{1}\otimes C_{2}))=C_{1}$$
$$\tau_2(\Lambda(C_{1},C_{2}))=\tau_2(conv(C_{1}\otimes C_{2}))=\mbox{tr}_1(conv(C_{1}\otimes C_{2}))=C_{2}$$
Then $\tau(\Lambda(C_{1},C_{2}))=(C_{1},C_{2})$.
\end{proof}

Again, as in \cite{extendedql}, if we take into account physical
considerations, $\Lambda\circ\tau$ is not the identity function.
This is because when we take partial traces, we face the risk of
losing information which will not be recovered when we make products
of states. So we obtain the same slogan as before \cite{extendedql}:
\emph{``going down and then going up is not the same as going up and
then going down''}. We show these maps in Figure \ref{f:maps}. How
is this related to entanglement? If we restrict $\Lambda\circ\tau$
to the set of product states, then it reduces to the identity
function, for if $\rho=\rho_{1}\otimes\rho_{2}$, then:

\begin{equation}
\Lambda\circ\tau(\{\rho\})=\{\rho\}
\end{equation}

\noindent On the other hand, it should be clear that if $\rho$ is an
entangled state

\begin{equation}\label{entangled equation}
\Lambda\circ\tau(\{\rho\})\neq\{\rho\}
\end{equation}

\noindent because
$\Lambda\circ\tau(\{\rho\})=\{\mbox{Tr}_{2}(\rho)\otimes\mbox{Tr}_{1}(\rho)\}\neq\{\rho\}$
for any entangled state. This property points in the direction of an
arrow characterization of entanglement. There are mixed states which
are not product states, and so, entangled states are not the only
ones who satisfy equation \ref{entangled equation}. What is the
condition satisfied for a general mixed state? The following
proposition summarizes all of this.

\begin{prop}\label{subir bajar}
If $\rho$ is a separable state, then there exists a convex set
$S_{\rho}\subseteq\mathcal{S}(\mathcal{H})$ such that $\rho\in
S_{\rho}$ and $\Lambda\circ\tau(S_{\rho})=S_{\rho}$. More generally,
for a convex set $C\subseteq \mathcal{S}(\mathcal{H})$, then there
exists a convex set $S_{C}\subseteq\mathcal{S}(\mathcal{H})$ such
that $\Lambda\circ\tau(S_{C})=S_{C}$. For a product state, we can
choose $S_{\rho}=\{\rho\}$. Any proposition
$C\in\mathcal{L}_{\mathcal{C}}$ which has at lest one non-separable
state, satisfies that there is no convex set $S$ such that
$C\subseteq S$ and $\Lambda\circ\tau(S)=S$.
\end{prop}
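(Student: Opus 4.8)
The plan is to split the statement into its three parts and handle them in increasing order of difficulty. The first part concerns separable states. Given a separable $\rho$, write $\rho=\sum_{ij}\lambda_{ij}\rho_i^1\otimes\rho_j^2$ with $\lambda_{ij}\geq 0$ and $\sum_{ij}\lambda_{ij}=1$, and set $C_1=conv(\{\rho_1^1,\dots,\rho_k^1\})$ and $C_2=conv(\{\rho_1^2,\dots,\rho_l^2\})$ as in the previous proposition, then define $S_\rho:=\Lambda(C_1,C_2)=conv(C_1\otimes C_2)$. Clearly $\rho\in S_\rho\subseteq\mathcal{S}(\mathcal{H})$, and since $\tau\circ\Lambda=Id$ by Proposition \ref{subir bajar}, we get $\Lambda\circ\tau(S_\rho)=\Lambda(\tau(\Lambda(C_1,C_2)))=\Lambda(C_1,C_2)=S_\rho$. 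The ``more general'' clause for a convex $C\subseteq\mathcal{S}(\mathcal{H})$ follows the same way: put $C_i:=\tau_i(C)$ (convex by Proposition \ref{lastausonsurjective}) and $S_C:=\Lambda(C_1,C_2)$; then $\Lambda\circ\tau(S_C)=\Lambda(\tau(\Lambda(C_1,C_2)))=S_C$ again by $\tau\circ\Lambda=Id$, and $C\subseteq S_C$ because each element of $C$, being separable, lies in the convex hull of the products of its partial traces' supporting points — more directly, $C\subseteq\Lambda(\tau_1(C),\tau_2(C))$ since any $\rho\in C$ can be written using its own decomposition with one-sided marginals inside $\tau_i(C)$. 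For the product-state case $S_\rho=\{\rho\}$ works since $conv(\{\rho_1\}\otimes\{\rho_2\})=\{\rho_1\otimes\rho_2\}=\{\rho\}$.

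The third and main part is the converse: if $C\in\mathcal{L}_{\mathcal{C}}$ contains at least one non-separable state $\rho_0$, then no convex $S$ with $C\subseteq S$ satisfies $\Lambda\circ\tau(S)=S$. The key observation is that $\Lambda\circ\tau(S)=\Lambda(\tau_1(S),\tau_2(S))=conv(\tau_1(S)\otimes\tau_2(S))\subseteq\mathcal{C}_1\widetilde{\otimes}\,\mathcal{C}_2=\mathcal{S}(\mathcal{H})$ by equation \ref{e:separablestates}; that is, the image of $\Lambda$ always lands inside the separable states. Hence if $\Lambda\circ\tau(S)=S$, then $S\subseteq\mathcal{S}(\mathcal{H})$, so $S$ cannot contain the non-separable state $\rho_0\in C\subseteq S$, a contradiction.

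I expect the routine bookkeeping in the first two parts — in particular verifying $C\subseteq\Lambda(\tau_1(C),\tau_2(C))$ cleanly — to be the only place needing care, since one must make sure the one-sided marginals of a separable decomposition of an element of $C$ genuinely lie in $\tau_i(C)$; this is immediate once one notes $\mathrm{tr}_2(\rho_i^1\otimes\rho_j^2)=\rho_i^1$ only when $\rho_j^2$ has trace one, so one should instead argue directly that $\rho=\sum\lambda_{ij}\rho_i^1\otimes\rho_j^2$ already exhibits $\rho$ as a convex combination of products of elements that, after the appropriate normalization, are marginals $\mathrm{tr}_2(\rho)$-type objects — or, more simply, appeal to the fact that $C$ itself was used to build $S_C=\Lambda(\tau_1(C),\tau_2(C))$ and the previous proposition already guarantees each separable $\rho\in C$ lies in such a $\Lambda$-image. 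The third part, which is the conceptually important one, is actually short: it rests entirely on the fact that $\mathrm{ran}(\Lambda)\subseteq\mathcal{S}(\mathcal{H})$, so the ``obstacle'' is really just recognizing that this containment is what drives the whole characterization.
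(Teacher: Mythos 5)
Your treatment of the product-state case, of a single separable state, and of the final non-separability clause is sound. For a single separable $\rho$ you build $S_\rho=\Lambda(C_1,C_2)$ from the convex hulls of the factors appearing in a fixed decomposition of $\rho$ and then invoke $\tau\circ\Lambda=Id$; this is a slightly cleaner route than the paper, which works directly with the simplex $M$ of all convex combinations $\sum_{ij}\lambda_{ij}\rho_{1i}\otimes\rho_{2j}$ and verifies $\Lambda\circ\tau(M)=M$ by hand (the two sets coincide, since $M=conv(C_1\otimes C_2)$). The last clause is argued exactly as in the paper: the range of $\Lambda$ lies in $\mathcal{S}(\mathcal{H})$, so any fixed point of $\Lambda\circ\tau$ consists of separable states only.

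The gap is in the clause about a general convex $C\subseteq\mathcal{S}(\mathcal{H})$. Your candidate $S_C=\Lambda(\tau_1(C),\tau_2(C))$ does satisfy $\Lambda\circ\tau(S_C)=S_C$, but the containment $C\subseteq S_C$ that you try to establish is false in general, and both justifications you sketch fail for the same reason: the factors $\rho_i^1,\rho_j^2$ occurring in a separable decomposition of $\rho\in C$ are not the marginals of $\rho$ and need not lie in $\tau_i(C)$. Concretely, take $C=\{\rho\}$ with $\rho=\frac{1}{2}\left(|00\rangle\langle 00|+|11\rangle\langle 11|\right)$, a separable two-qubit state. Then $\tau_1(C)=\tau_2(C)=\{\frac{1}{2}\mathbf{1}\}$, so $\Lambda(\tau_1(C),\tau_2(C))=\{\frac{1}{4}\,\mathbf{1}\otimes\mathbf{1}\}$, which does not contain $\rho$. (The fallback ``appeal to the previous proposition'' only shows that $\rho$ lies in the $\Lambda$-image built from $\rho$'s \emph{own} decomposition, not in the one built from the marginals of $C$.) The repair is immediate and is exactly what the paper does: take $S_C=\mathcal{S}(\mathcal{H})=\Lambda(\mathcal{C}_1,\mathcal{C}_2)$, which contains every convex subset of separable states and is a fixed point of $\Lambda\circ\tau$ by the same identity $\tau\circ\Lambda=Id$ you already use.
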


\begin{proof}
We have already seen above that if $\rho$ is a product state, then
$\Lambda\circ\tau(\{\rho\})=\{\rho\}$, and so $S_{\rho}=\{\rho\}$.
If $\rho$ is a general separable state, then there exists
$\rho_{1k}\in\mathcal{C}_{1}$, $\rho_{2k}\in\mathcal{C}_{1}$ and
$\alpha_{k}\geq 0, \sum_{k=1}^{N}\alpha_{k}=1$ such that
$\rho=\sum_{k=1}^{N}\alpha_{k}\rho_{1k}\otimes\rho_{2k}$. Now
consider the convex set (a simplex)

\begin{equation}
M=\{\sigma\in\mathcal{C}\,|\,\sigma=\sum_{i,j=1}^{N}\lambda_{i,j}\rho_{1i}\otimes\rho_{2j},
\lambda_{i,j}\geq 0, \sum_{i,j=1}^{N}\lambda_{i,j}=1\}
\end{equation}

\noindent It is formed by all convex combinations of products of the
elements which appear in the decomposition of $\rho$. It should be
clear that $\rho\in M$. If we apply
$(\mbox{tr}_{1}(),\mbox{tr}_{2}())$ to $\sigma\in M$, we get

\begin{equation}
(\mbox{tr}_{1}(\sigma),\mbox{tr}_{2}(\sigma))=(\sum_{i=1}^{N}(\sum_{j=1}^{N}\lambda_{i,j})
\rho_{1i},\sum_{j=1}^{N}(\sum_{i=1}^{N}\lambda_{i,j})\rho_{2j})=(\sum_{i=1}^{N}\mu_{i}
\rho_{1i},\sum_{j=1}^{N}\nu_{j}\rho_{2j})
\end{equation}

\noindent with $\mu_{i}=\sum_{j=1}^{N}\lambda_{i,j}$ and
$\nu_{i}=\sum_{i=1}^{N}\lambda_{i,j}$. Note that
$\sum_{j=1}^{N}\mu_{j}=\sum_{j=1}^{N}\nu_{j}=1$. If we now apply
$\Lambda$:

\begin{equation}
\Lambda((\sum_{i=1}^{N}\mu_{i}
\rho_{1i},\sum_{j=1}^{N}\nu_{j}\rho_{2j}))=\sum_{i,j=1}^{N}\mu_{i}\nu_{j}\rho_{1i}\otimes\rho_{2j}
\end{equation}

\noindent which is an element of $M$, and so, we conclude that
$\Lambda\circ\tau(M)\subseteq M$. On the other hand, if $\sigma\in
M$, then
$\sigma=\sum_{i,j=1}^{N}\lambda_{i,j}\rho_{1i}\otimes\rho_{2j}$
(convex combination). It is important to note that
$\Lambda\circ\tau(M)$ is a convex set, because trace operators
preserve convexity, and $\Lambda$ is a convex hull. On the other
hand
$\Lambda\circ\tau(\{\rho_{1i}\otimes\rho_{2j}\})=\{\rho_{1i}\otimes\rho_{2j}\}$.
And so, by convexity of $\Lambda\circ\tau(M)$,
$\sigma\in\Lambda\circ\tau(M)$. Finally, $\Lambda\circ\tau(M)=M$
(and $\rho\in M$). Then $M$ is the desired
$S_{\rho}\subseteq\mathcal{S}(\mathcal{H})$.

If $C\subseteq \mathcal{S}(\mathcal{H})$, then all $\rho\in C$ are
separable. $\mathcal{S}(\mathcal{H})$ is by definition, a convex
set. Let us see that it is invariant under $\Lambda\circ\tau$. First
of all, we know that $\mathcal{S}(\mathcal{H})$ is formed by all
possible convex combinations of the from $\rho_{1}\otimes\rho_{2}$,
with $\rho_{1}\in\mathcal{C}_{1}$ and $\rho_{2}\in\mathcal{C}_{2}$.
But for each one of these tensor products,
$\Lambda\circ\tau(\{\rho_{1}\otimes\rho_{2}\})=\{\rho_{1}\otimes\rho_{2}\}$,
and so belongs to $\Lambda\circ\tau(\mathcal{S}(\mathcal{H}))$. This
is a convex set, thus all convex combinations of them belong to it.
So we can conclude that

\begin{equation}
\Lambda\circ\tau(\mathcal{S}(\mathcal{H}))=\mathcal{S}(\mathcal{H})
\end{equation}

Now, consider $C\in\mathcal{L}_{\mathcal{C}}$ such that there exists
$\rho\in C$, being $\rho$ nonseparable.
$\Lambda\circ\tau(S)\subseteq \mathcal{S}(\mathcal{H})$ for all
$S\in\mathcal{L}_{\mathcal{C}}$. Then, it could never happen that
there exists $S\in\mathcal{L}_{\mathcal{C}}$ such that $C\subseteq
S$ and $\Lambda\circ\tau(S)=S$.
\end{proof}

From the last proposition, we conclude that there is a property
which the convex subsets of separable states satisfy, and convex
subsets which include non-separable states do not. This motivates
the following definition.

\begin{definition}
If $C\in\mathcal{L}_{\mathcal{C}}$, we will say that it is a
separable proposition if there exists
$S_{C}\in\mathcal{L}_{\mathcal{C}}$ such that
$\Lambda\circ\tau(S_{C})=S_{C}$ and $C\subseteq S$. Otherwise, we
will say that it is a non-separable or entangled proposition.
\end{definition}

\subsection{The Inverse $\tau$-map}\label{s:inversetaumap}

In section \ref{s:projectionsc} we defined the function
$\tau=(\tau_{1},\tau_{2})$. Now we show that using the inverse map
$\tau^{-1}=(\tau^{-1}_{1},\tau^{-1}_{2})$ we obtain lattice
morphisms. It is easy to show that $\tau^{-1}_{i}$ maps any
proposition from $\mathcal{C}_{i}$ into a proposition of
$\mathcal{C}$. This is because the pre-image of a convex set under
these functions is again a convex set. If $C_{1}$ is a proposition
of $\mathcal{C}$ and if $\tau_{1}(\rho),\tau_{1}(\rho')\in C_{1}$,
it is clear that any convex combination of $\rho$ and $\rho'$ will
belong to $\tau^{-1}_{1}(C_{1})$, because the partial trace is
linear and $C_{1}$ is convex.






\begin{prop}
For all $X\in\mathcal{L}_{C}$ $X\subseteq\tau^{-1}_{1}(\tau_{1}(X))$
and for all $Y\in\mathcal{L}_{C_{1}}$, $\tau_{1}(\tau^{-1}_{1}(Y))$.
For all $C\subseteq\mathcal{C}$ we have
$C\subseteq\tau^{-1}_{1}(C_{1})\wedge\tau^{-1}_{2}(C_{2})$
\end{prop}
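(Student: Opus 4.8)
The three claims are elementary properties of direct and inverse images of a map, together with the surjectivity of the $\tau_i$ already established. I read the statement (correcting the evident typos) as: (i) $X\subseteq\tau_1^{-1}(\tau_1(X))$ for every $X\in\mathcal{L}_{\mathcal{C}}$; (ii) $\tau_1(\tau_1^{-1}(Y))=Y$ for every $Y\in\mathcal{L}_{\mathcal{C}_1}$; and (iii) with $C_i:=\tau_i(C)$, that $C\subseteq\tau_1^{-1}(C_1)\wedge\tau_2^{-1}(C_2)$ for every convex $C\subseteq\mathcal{C}$. Throughout I use, as in the paragraph preceding the proposition, that $\tau_i$ acts on states by partial trace, that $\tau_i^{-1}$ denotes the genuine set-theoretic pre-image (which lands in $\mathcal{L}_{\mathcal{C}}$ because the pre-image of a convex set under the linear partial-trace map is convex), and that $\wedge$ in $\mathcal{L}_{\mathcal{C}}$ is set intersection.

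For (i) the plan is the one-line argument valid for any function $f$ and any subset $X$ of its domain: if $\rho\in X$ then $\tau_1(\rho)\in\tau_1(X)$, so $\rho$ belongs to the pre-image $\tau_1^{-1}(\tau_1(X))$; hence $X\subseteq\tau_1^{-1}(\tau_1(X))$. For (ii) I would first record the inclusion $\tau_1(\tau_1^{-1}(Y))\subseteq Y$, which likewise holds for any $f$, since each element of $\tau_1(\tau_1^{-1}(Y))$ has the form $\tau_1(\rho)$ with $\tau_1(\rho)\in Y$. For the reverse inclusion I would invoke the surjectivity of $\tau_1$ (Proposition~\ref{lastausonsurjective}): given $\rho_1\in Y$ there is $\rho\in\mathcal{C}$ with $\tau_1(\rho)=\rho_1$ --- concretely $\rho=\rho_1\otimes\rho_2$ for any fixed $\rho_2\in\mathcal{C}_2$, exactly the construction used in that proof --- so $\rho\in\tau_1^{-1}(Y)$ and $\rho_1=\tau_1(\rho)\in\tau_1(\tau_1^{-1}(Y))$. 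Putting the two inclusions together gives the equality, and the same argument works verbatim for $\tau_2$.

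For (iii) the plan is simply to apply (i) to each factor: $C\subseteq\tau_1^{-1}(\tau_1(C))=\tau_1^{-1}(C_1)$ and $C\subseteq\tau_2^{-1}(\tau_2(C))=\tau_2^{-1}(C_2)$, so $C$ is contained in $\tau_1^{-1}(C_1)\cap\tau_2^{-1}(C_2)=\tau_1^{-1}(C_1)\wedge\tau_2^{-1}(C_2)$. I would also add the remark that $\tau_1^{-1}(C_1)$ and $\tau_2^{-1}(C_2)$, and hence their meet, are again convex, so that everything genuinely lies in $\mathcal{L}_{\mathcal{C}}$.

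There is no real obstacle here: the only non-formal ingredient is the surjectivity of the $\tau_i$, which is already available, and the single point to be careful about is the notational overloading of $\tau_i$ (the lattice map versus the underlying partial trace on density matrices). I would fix once and for all which object is meant, so that the elementary image/pre-image manipulations above are unambiguous.
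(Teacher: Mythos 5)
Your proof is correct and takes essentially the same route as the paper: all three claims are handled by the same elementary image/pre-image manipulations. The only difference is in the (garbled) second clause, where the paper proves only the inclusion $\tau_{1}(\tau^{-1}_{1}(Y))\subseteq Y$, whereas you also supply the reverse inclusion via the surjectivity of $\tau_{1}$ to get equality --- a correct and harmless strengthening.
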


\begin{proof}
Let $X\in\mathcal{L}_{C}$. Then, if $x\in X$ it follows that
$\tau_{1}(x)\in\tau_{1}(X)$ and so,
$X\subseteq\tau^{-1}_{1}(\tau_{1}(X))$. If $Y\in\mathcal{L}_{C_{1}}$
and $z\in \tau_{1}(\tau^{-1}_{1}(Y))$. Then by definition of
$\tau^{-1}_{1}(Y)$, it follows that $z\in Y$.

Let $C\in\mathcal{L}_{\mathcal{C}}$. Now
$\tau_{1}(C)=C_{1}\in\mathcal{L}_{\mathcal{C}_{1}}$ and
$\tau_{2}(C)=C_{2}\in\mathcal{L}_{\mathcal{C}_{2}}$. Then, it is
apparent that $C\subseteq\tau^{-1}_{1}(C_{1})$ and
$C\subseteq\tau^{-1}_{2}(C_{2})$. And so
$C\subseteq\tau^{-1}_{1}(C_{1})\wedge\tau^{-1}_{2}(C_{2})$.

\end{proof}

\begin{prop}
For all $a,b\in\mathcal{L}_{\mathcal{C}_{1}}$ $\tau^{-1}_{1}(a\wedge
b)=\tau^{-1}_{1}(a)\wedge\tau^{-1}_{1}(b)$, $\tau^{-1}_{1}(a\vee
b)=\tau^{-1}_{1}(a)\vee\tau^{-1}_{1}(b)$. Furthermore,
$\tau^{-1}_{1}$ is an injective function and if
$a,b\in\mathcal{L}_{\mathcal{C}_{1}}$ and $a\subseteq b$, then
$\tau^{-1}_{1}(a)\subseteq\tau^{-1}_{1}(b)$. If $\rho\neq\rho'$ then
$\tau^{-1}_{1}(\rho)\wedge\tau^{-1}_{1}(\rho')=\mathbf{0}$.
\end{prop}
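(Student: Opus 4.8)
The plan is to dispatch the statements one at a time, using throughout that $\tau^{-1}_1$ is the set-theoretic preimage under the partial-trace map underlying $\tau_1$, that this map is linear (hence affine) on $\mathcal{C}$, and that $\tau_1$ is surjective onto $\mathcal{L}_{\mathcal{C}_1}$ with $\tau_1(\tau^{-1}_1(Y))=Y$ for every $Y$ (both recorded in the preceding propositions).

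First I would observe that the $\wedge$-identity and the order-preservation are purely formal: preimages commute with intersections and preserve inclusions for any map, and since $\wedge$ is intersection in both $\mathcal{L}_{\mathcal{C}_1}$ and $\mathcal{L}_{\mathcal{C}}$, this gives $\tau^{-1}_1(a\wedge b)=\tau^{-1}_1(a)\wedge\tau^{-1}_1(b)$ and $a\subseteq b\Rightarrow\tau^{-1}_1(a)\subseteq\tau^{-1}_1(b)$ at once. Injectivity follows from surjectivity of $\tau_1$: if $\tau^{-1}_1(a)=\tau^{-1}_1(b)$, applying $\tau_1$ and using $\tau_1\circ\tau^{-1}_1=\mathrm{Id}$ gives $a=b$. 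For the final clause, when $\rho\neq\rho'$ one has $\{\rho\}\wedge\{\rho'\}=\emptyset=\mathbf{0}$ and $\tau^{-1}_1(\mathbf{0})=\mathbf{0}$, so by the $\wedge$-identity already proved $\tau^{-1}_1(\rho)\wedge\tau^{-1}_1(\rho')=\tau^{-1}_1(\{\rho\}\cap\{\rho'\})=\mathbf{0}$.

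The only substantive assertion is the $\vee$-identity, where $a\vee b=conv(a,b)$. The inclusion $\tau^{-1}_1(a)\vee\tau^{-1}_1(b)\subseteq\tau^{-1}_1(a\vee b)$ is easy: by monotonicity $\tau^{-1}_1(a)$ and $\tau^{-1}_1(b)$ both sit inside $\tau^{-1}_1(conv(a,b))$, and the latter set is convex because it is the preimage of a convex set under an affine map, so it contains the convex hull $\tau^{-1}_1(a)\vee\tau^{-1}_1(b)$. For the reverse inclusion I would take $\rho$ with $\tau_1(\rho)=\alpha x+(1-\alpha)y$, $x\in a$, $y\in b$, $0\le\alpha\le 1$, and try to split $\rho=\alpha\sigma+(1-\alpha)\sigma'$ with $\tau_1(\sigma)=x$, $\tau_1(\sigma')=y$; the trace and partial-trace constraints on such a splitting are automatically consistent, so the whole matter reduces to choosing the lifts of $x$ and $y$ so that $\rho-(1-\alpha)\sigma'\ge 0$.

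I expect this last lifting to be the main obstacle. A convex decomposition of a partial trace $\tau_1(\rho)$ need not, in general, lift to a convex decomposition of $\rho$ along the fibres of $\tau_1$ --- for instance, when $x$ is a pure state the fibre over $x$ consists only of product states, which already forces $\rho$ to be of a restricted form. So the plan at this step is either to supply an extra structural input about the partial trace that guarantees a positive lift exists, or, failing that, to isolate exactly which inclusion and which hypotheses are actually needed, in parallel with the treatment of the analogous forward statement for $\tau_i$ in the previous subsection.
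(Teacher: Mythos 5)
The parts you dispatch formally --- the $\wedge$-identity, monotonicity, injectivity via $\tau_{1}\circ\tau^{-1}_{1}=\mathrm{Id}$, the disjointness of fibres, and the inclusion $\tau^{-1}_{1}(a)\vee\tau^{-1}_{1}(b)\subseteq\tau^{-1}_{1}(a\vee b)$ via convexity of preimages --- are all correct and coincide in substance with the paper's element-chasing arguments. The one step you leave open, lifting a convex decomposition of $\tau_{1}(x)$ to a convex decomposition of $x$ along the fibres, is exactly where the paper's proof does its only real work. The paper takes arbitrary lifts $\sigma\in\tau^{-1}_{1}(a)$, $\sigma'\in\tau^{-1}_{1}(b)$ of the two components of $\tau_{1}(x)=\alpha\rho+\beta\rho'$, writes $x=\alpha\sigma+\beta\sigma'+\varsigma$ with $\varsigma\in\mathbf{Ker}(\tau_{1})$, and absorbs the error term into the second lift, $x=\alpha\sigma+\beta(\sigma'+\frac{1}{\beta}\varsigma)$, concluding that $\sigma'+\frac{1}{\beta}\varsigma\in\tau^{-1}_{1}(b)$ merely because its image under $\tau_{1}$ lies in $b$.

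That last conclusion is unjustified, and your suspicion about the lifting step is well founded: membership in $\tau^{-1}_{1}(b)\subseteq\mathcal{L}_{\mathcal{C}}$ also requires $\sigma'+\frac{1}{\beta}\varsigma$ to be a positive operator, which the kernel trick does not guarantee. In fact the inclusion $\tau^{-1}_{1}(a\vee b)\subseteq\tau^{-1}_{1}(a)\vee\tau^{-1}_{1}(b)$ fails. Take two qubits, $a=\{|0\rangle\langle 0|\}$, $b=\{|1\rangle\langle 1|\}$, and $x=|\psi\rangle\langle\psi|$ with $|\psi\rangle=\frac{1}{\sqrt{2}}(|00\rangle+|11\rangle)$. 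Both reduced states of $x$ equal $\frac{1}{2}\mathbf{1}=\frac{1}{2}|0\rangle\langle 0|+\frac{1}{2}|1\rangle\langle 1|\in a\vee b$, so $x\in\tau^{-1}_{1}(a\vee b)$; but $x$ is an extreme point of $\mathcal{C}$, so every representation of $x$ as a convex combination of states is trivial, and $x$ itself lies in neither $\tau^{-1}_{1}(a)$ nor $\tau^{-1}_{1}(b)$ since its reduced state is neither $|0\rangle\langle 0|$ nor $|1\rangle\langle 1|$. Hence $x\notin conv(\tau^{-1}_{1}(a),\tau^{-1}_{1}(b))$. So no additional structural input can close the gap you isolated: only the inclusion you actually proved holds in general, and the $\vee$-clause of the proposition, together with the paper's proof of it, is incorrect as stated.
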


\begin{proof}
Consider the sets $\tau^{-1}_{1}(a\wedge b)$ and
$\tau^{-1}_{1}(a)\wedge\tau^{-1}_{1}(b)$. Then,
$x\in\tau^{-1}_{1}(a)$ and $x\in\tau^{-1}_{1}(b)$. If
$x\in\tau^{-1}_{1}(a\wedge b)$, then $\tau_{1}(x)\in a\wedge
b\subseteq a$, and we obtain also $\tau_{1}(x)\in a\wedge b\subseteq
b$. This means that $x\in\tau^{-1}_{1}(a)$ and
$x\in\tau^{-1}_{1}(b)$. So we have $\tau^{-1}_{1}(a\wedge
b)\subseteq\tau^{-1}_{1}(a)\wedge\tau^{-1}_{1}(b)$. On the other
hand, if $x\in\tau^{-1}_{1}(a)\wedge\tau^{-1}_{1}(b)$, then
$x\in\tau^{-1}_{1}(a)$ and $x\in\tau^{-1}_{1}(b)$. This means that
$\tau_{1}(x)\in a$ and $\tau_{1}(x)\in b$, and so, $\tau_{1}(x)\in
a\wedge b$. This means that $x\in\tau^{-1}_{1}(a\wedge b)$. This
concludes the proof that $\tau^{-1}_{1}(a\wedge
b)=\tau^{-1}_{1}(a)\wedge\tau^{-1}_{1}(b)$.

If $x\in\tau^{-1}_{1}(a)\vee\tau^{-1}_{1}(b)$ then
$x=\alpha\rho+\beta\rho'$, with $\tau_{1}(\rho)\in a$ and
$\tau_{1}(\rho')\in b$. So
$\tau_{1}(x)=\alpha\tau_{1}(\rho)+\beta\tau_{1}(\rho')\in a\vee b$.
This means that $x\in\tau^{-1}_{1}(a\vee b)$, and we have
$\tau^{-1}_{1}(a\vee
b)\supseteq\tau^{-1}_{1}(a)\vee\tau^{-1}_{1}(b)$. Now, let
$x\in\tau^{-1}_{1}(a\vee b)$. Then, $\tau_{1}(x)\in a\vee b$. This
means that $\tau_{1}(x)=\alpha\rho+\beta\rho'$ (convex combination),
with $\rho\in a$ and $\rho'\in b$. There exist
$\sigma\in\tau^{-1}_{1}(a)$ and $\sigma'\in\tau^{-1}_{1}(b)$ such
that $\tau_{1}(\sigma)=\rho$ and $\tau_{1}(\sigma')=\rho'$. Then
$\tau_{1}(x)=\alpha\tau_{1}(\sigma)+\beta\tau_{1}(\sigma')$.
$\tau_{1}()$ is a linear function so, the last equality implies
$\tau_{1}(x-(\alpha\sigma+\beta\sigma'))=0$. Then, there exists
$\varsigma\in\mathbf{Ker}(\tau_{1}())$ such that
$x=\alpha\sigma+\beta\sigma'+\varsigma$. If $\beta=0$, then
$\alpha=1$ (convex combination), and then,
$x=\sigma\in\tau^{-1}_{1}(a)$, and in that case
$x\in\tau^{-1}_{1}(a)\vee\tau^{-1}_{1}(b)$. If $\beta\neq 0$, we can
put $x=\alpha\sigma+\beta(\sigma'+\frac{1}{\beta}\varsigma)$.
$\tau_{1}((\sigma'+\frac{1}{\beta}\varsigma))=\tau_{1}(\sigma')+0\in
b$, and so $\sigma'+\frac{1}{\beta}\varsigma\in\tau^{-1}_{1}(b)$.
This proves that $x\in\tau^{-1}_{1}(a)\vee\tau^{-1}_{1}(b)$, and
thus $\tau^{-1}_{1}(a\vee
b)\subseteq\tau^{-1}_{1}(a)\vee\tau^{-1}_{1}(b)$

Let $a$ and $b$ be two propositions such that $a\neq b$. Suppose
that $\tau^{-1}_{1}(a)=\tau^{-1}_{1}(b)$. If $a\neq b$, there exists
$\rho_{a}\in a$ such that $\rho_{a}\notin b$. It is clear that
$\tau^{-1}_{1}(\rho_{a})\subseteq\tau^{-1}_{1}(a)=\tau^{-1}_{1}(b)$
and then, there exists $\rho\in\tau^{-1}_{1}(b)$ such that
$\tau_{1}(\rho)=\rho_{a}$. But by definition of $\tau^{-1}_{1}(b)$,
we would have that $\rho_{a}\in b$, a contradiction. Thus, we have
$\tau^{-1}_{1}(a)\neq\tau^{-1}_{1}(b)$.

If $a\subseteq b$, suppose that $x\in\tau^{-1}_{1}(a)$. Then
$\tau_{1}(x)\in b$, and so $x\in\tau^{-1}_{1}(b)$ also. If
$x\in\tau^{-1}_{1}(\rho)$, $x\in\tau^{-1}_{1}(\rho')$ and
$\rho\neq\rho'$, then $\rho=\tau_{1}(x)=\rho'$, a contradiction.
\end{proof}

\section{The Difference with Other Approaches and
Conclusions}\label{s:Conclusions}

The problem of compound quantum systems has been widely studied from
different approaches. An important difference of our approach is
that it treats improper mixtures in a different way. In this work,
we made the following reasoning line. In $CM$ the fundamental
description is given by subsets (propositions) of the phase space.
Statistical mixtures \emph{are not fundamental}; they appear as a
limitation in the capability of knowledge of the observer. It is for
that reason that they are expressed as proper mixtures, and so in
the orthodox logical approach they appear in \emph{different
levels}: the propositions lie in the lattice
$\mathcal{L}_{\mathcal{C}\mathcal{M}}$, while the mixtures are
measures over this lattice. Pure states are in one to one
correspondence with the atoms of the lattice, and so they are also
included as elements of the lattice, but mixtures are in a different
level.

The situation turns radically different in $QM$ if we accept that
improper mixtures do not admit an ignorance interpretation. If
$S_{1}$ is in a state represented by an improper mixture $\rho$,
there is no more physical information available for the observer.
$\rho$ represents the actual state for $S_{1}$ and we cannot get
more information, not because of our experimental limitations, but
because of that information \emph{does not exist}. Thus, they should
be represented at the same level as that of pure states, because
they are maximal pieces of information.

On the other hand, influenced by historical reasons, the orthodox
$QL$ approach, still retains the analogy with $CM$ and considers
improper mixtures in a different level than that of the
propositions. Thus, the orthodox $QL$ approach presents some
difficulties when compound quantum systems are involved. We studied
these problems in section \ref{s:Explicamos por que} and we gave a
list of conditions on the structures that we are looking for in
order to solve these difficulties.

The structures presented in this work and in \cite{extendedql},
$\mathcal{L}_{\mathcal{C}}$ and $\mathcal{L}$, do not have these
problems, but on the contrary, they incorporate these quantum
mechanical features explicitly. This is so because they satisfy the
conditions listed in section \ref{s:Explicamos por que}. While each
state (pure or mixed) induces a measure in the lattice of
projections, this has nothing to do with the identification of these
measures with classical mixtures. Indeed, any pure state is not
dispersion free also and so induces a measure over
$\mathcal{L}_{v\mathcal{N}}$. This measure has a radical different
nature from that of classical measures; in our approach, improper
mixtures are in the same status than pure states and induce measures
over $\mathcal{L}_{v\mathcal{N}}$ as well as pure states, but
measures and states are not identified. This situation is very
different from that of $CM$, in which the measures induced by pure
states are trivial.

Our approach -specially $\mathcal{L}_{\mathcal{C}}$- presents itself
as a natural logical and algebraic language for the study of topics
which involve compound quantum systems such as quantum information
processing and decoherence, which concentrate on the study of
$\mathcal{C}$ instead of the lattice of projections. In particular,
we can map states of the compound system into states of its
subsystems at the lattice level, while this cannot be done in the
standard $QL$ approach. Furthermore, $\mathcal{L}_{\mathcal{C}}$ and
$\mathcal{L}$ capture the physics behind the fact that we can mix
states according to the ``mixing principle" of section
\ref{s:introduction to QL}.

As discussed in section \ref{s:interaction in QM}, our construction
shows a new radical difference with classical mechanics, namely,
that of the enlargement of the propositional structure when
interactions are involved, a difference which is not clear in the
standard $QL$ approach.

Moreover, as we showed in section \ref{s:entanglement},
$\mathcal{L}_{\mathcal{C}}$ sheds new light into algebraic
properties of quantum entanglement via the study of the natural
arrows defined between the lattice of the system and its subsystems.
The study of these arrows reveals itself as adequate for the of
algebraic characterization of entanglement.

\vskip1truecm

\noindent {\bf Acknowledgements} \noindent This work was partially
supported by the following grants: PIP N$^o$ 6461/05 (CONICET). We
wish to thank G. Domenech for careful reading and discussions.

\end{document}